\documentclass[aps,prd,onecolumn,11pt]{revtex4}

\usepackage{amsmath,amssymb,amsthm}
\usepackage{graphicx}
\usepackage{bm}
\usepackage{hyperref}
\usepackage{enumitem}
\usepackage{booktabs}
\usepackage{subcaption} 
\usepackage{multirow}
\usepackage{tabularx}
\usepackage{float}
\usepackage{tikz}
\usepackage{natbib} 

\newtheorem{thm}{Theorem}[section]
\newtheorem{corollary}[thm]{Corollary}

\newcommand{\be}{\begin{equation}}
\newcommand{\ee}{\end{equation}}
\newcommand{\bea}{\begin{eqnarray}}
\newcommand{\eea}{\end{eqnarray}}

\setlist[enumerate,1]{label=(A\arabic*),ref=A\arabic*}

\begin{document}

\title{Averaging Theory and Dynamical Systems in Cosmology:\\
A Qualitative Study of Oscillatory Scalar-Field Models}

\author{Genly Leon$^{1,2}$}
\author{Claudio Michea$^{3}$}
\affiliation{$^{1}$Departamento de Matem\'aticas, Universidad Cat\'olica del Norte, Avda. Angamos 0610, Casilla 1280, Antofagasta, Chile, Antofagasta, Chile}
\affiliation{$^{2}$Institute of Systems Science, Durban University of
Technology, Durban 4000, South Africa}
\affiliation{$^{3}$Departamento de F\'isica, Universidad Cat\'olica del Norte, Avda. Angamos 0610, Casilla 1280, Antofagasta, Chile}

\begin{abstract}
We study cosmological models using dynamical‑systems and averaging methods, encompassing flat and open FLRW geometries as well as the LRS Bianchi types I, III, and V. Under mild regularity and frequency‑scaling assumptions, we obtain a near‑identity conjugacy between the oscillatory flow and an averaged slow flow, with
$\| \mathbf{x}(t)-\bar {\mathbf{x}}(t)\| =\mathcal{O}(H(t))$.
The effective systems preserve the original asymptotics and yield geometry‑dependent late‑time attractor classifications. A corollary addresses the case in which the leading averaged vector field vanishes, so the system exhibits no autonomous drift at order $H^0$.

\end{abstract}

\maketitle

\section{Introduction}
In General Relativity and Cosmology, the Cosmological Principle is satisfied by the homogeneous and isotropic Friedmann--Lemaître--Robertson--Walker (FLRW) metrics.  Additionally, it is believed that cold dark matter, a type of matter that is not visible and only interacts through gravity, 
exists  \citep{Primack:1983xj, Peebles:1984zz, Bond:1984fp, Trimble:1987ee, 
Turner:1991id}. Dark energy is also introduced to explain the Universe's accelerated expansion \citep{Carroll:1991mt}, as supported by numerous observations \citep{SupernovaCosmologyProject:1997zqe, SupernovaSearchTeam:1998fmf, Planck:2018vyg}. The simplest model is $\Lambda$CDM; a paradigm that has proven to be the most successful model describing cosmological observations with fewer constants \citep{Planck:2018vyg}. 

However, despite numerous efforts, no dark matter particles have been directly detected. Their presence is inferred solely from gravitational effects on galaxies and large‑scale structure. Key aspects of their microphysical nature—and that of dark energy—also remain unresolved.

Nevertheless, there is a direction in the literature that deviates from this line of thought and supports the idea that observations can be explained by altering Einstein's equations, leading to modified theories of gravity \citep{Clifton:2011jh, Nojiri:2010wj, DeFelice:2010aj, Capozziello:2011et, CANTATA:2021ktz}. Unified descriptions of the early and late-time Universe included scalar fields \citep{Urena-Lopez:2011gxx, Capozziello:2005tf, Leon:2022oyy}. Moreover, scalar fields are prominent in the physical description of the Universe in inflationary scenario \citep{Guth:1980zm} and can be used as an explanation of late time acceleration of the Universe as a quintessence field used in a generalization of the $\Lambda$CDM model \citep{Ratra:1987rm, Copeland:1999cs}, a phantom scalar field \citep{Nojiri:2005pu, Capozziello:2005tf}, a quintom scalar field model \citep{Cai:2009zp, Guo:2004fq}, a chiral cosmology \citep{Dimakis:2020tzc, Chervon:2013btx}, or multi-scalar field models describing various cosmological epochs \citep{Copeland:1999cs, Tsujikawa:2000wc, Achucarro:2010jv, Akrami:2020zfz}. 
Scalar field models can be studied using qualitative techniques from dynamical systems theory \citep{perko, WE, Coley}, which enable the stability analysis of solutions. A notable example is chaotic inflation, a model of cosmic inflation in which the potential takes the quadratic form \citep{Linde:1983gd, Linde:1986fd}. In addition, asymptotic methods and averaging theory \citep{SandersEtAl2010, Fajman:2020yjb} are valuable for extracting information about the solution space of scalar-field cosmologies, both in vacuum and in the presence of matter \citep{Leon:2019iwj,Leon:2020pfy, Leon:2021lct}. 

A key idea is to construct a time-averaged version of the original system, smoothing out its oscillations \citep{Fajman:2020yjb} while preserving the same late-time behavior. In many of these scenarios, the cosmological dynamics is governed by the Einstein–Klein–Gordon (EKG) system, where the scalar field acts as a dynamical source for the spacetime geometry \citep{WE, Coley}. This framework offers a systematic way to address key conceptual and phenomenological issues in $\Lambda$CDM, including the nature of dark components and the emergence of effective matter sources driving cosmic expansion.

A rich class of scalar-field cosmologies is characterized by the presence of nonlinear or periodic potentials, which induce oscillatory behavior in the scalar field. Such oscillations arise naturally in massive scalar-field models, axion-like scenarios, and post-inflationary dynamics, and they often persist over cosmological time scales \citep{Linde:1983gd, Linde:2002ws,fenichel1979, SandersEtAl2010}. Although the underlying field equations are well defined, the resulting dynamics are intrinsically multiscale: fast oscillations of the scalar field coexist with a slowly evolving cosmological background. From a physical perspective, these microscopic oscillations are typically unobservable, whereas their averaged contribution to the energy–momentum tensor governs the large-scale spacetime dynamics. This separation of time scales and observables motivates the use of averaging methods from dynamical systems theory, which allow for the construction of effective, averaged formulations of the EKG system that describe the slow cosmological evolution without explicitly resolving the fast oscillatory dynamics.
The relevance of this description becomes particularly clear in the late-time regime. In scalar-field cosmologies with oscillatory dynamics, regular oscillations around a stable configuration typically emerge only after initial transient phases—such as kinetic- or curvature-dominated evolution—have decayed. Once this regime is reached, the subsequent evolution is governed by asymptotic solutions of the dynamical system, which fix averaged quantities such as the effective equation of state, the dilution rate of the scalar-field energy density, and the long-term behavior of anisotropies or spatial curvature. From a dynamical-systems perspective, this late-time evolution is often controlled by attractors, invariant manifolds, or scaling solutions that are largely insensitive to initial conditions. These asymptotic states determine the effective matter content, expansion laws, and stability properties of cosmological models \citep{Copeland:1997et, Copeland:2006wr}.

This review is organised as follows. In \S\ref{Sect:2} we give the Einstein–Klein–Gordon field equations for scalar‑field cosmologies with barotropic matter in spatially homogeneous spacetimes, treating (i) LRS Bianchi III, Bianchi I and Kantowski–Sachs, (ii) LRS Bianchi V, and (iii) FLRW (closed, flat, open) reductions. In \S\ref{Sect:3} we review averaging methods and state the Unified Averaging Theorem for LRS Bianchi III, open FLRW, LRS Bianchi I/flat FLRW and LRS Bianchi V. Section \S\ref{sect:4} summarises the main results: attractor classification and bifurcation thresholds $\gamma=\tfrac{2}{3},1$, with a discussion of exceptional cases (Kantowski–Sachs and closed FLRW, the $D$‑normalization).

\section{The field equations}
\label{Sect:2}
We consider scalar-field cosmologies with matter in spatially homogeneous spacetimes, including FLRW models (flat, open, closed), anisotropic Bianchi types I, III, V, and the positively curved Kantowski--Sachs (KS) geometry. Each configuration introduces distinct curvature and anisotropy effects that shape the dynamical evolution and modulate scalar–matter coupling.

\subsection{LRS Bianchi III, Bianchi I, and Kantowski--Sachs}

The metric for these geometries is unified as
\begin{equation}
ds^2 = -dt^2 + e_1^2(t)\,dr^2 + e_2^2(t)\left(d\vartheta^2 + \frac{\sin^2(\sqrt{k}\vartheta)}{k}\,d\zeta^2\right),
\end{equation}
with $k = -1, 0, +1$ corresponding to Bianchi III, Bianchi I, and KS respectively. The Hubble and shear scalars are
\begin{align}
H &= -\tfrac{1}{3}\frac{d}{dt}\ln\left(e_1 e_2^2\right), &
\sigma &= \tfrac{1}{3}\frac{d}{dt}\ln\left(\frac{e_1}{e_2}\right),
\end{align}
and the curvature scalar evolves as
\begin{equation}
K = e_2^{-2}, \qquad {}^{(3)}\!R = 2kK, \qquad \dot{K} = -2(H + \sigma)K.
\end{equation}
The evolution equations of the scalar field and matter are:
\begin{subequations}\label{eq:interacting}
\begin{align}
\ddot{\phi} + 3H\dot{\phi} + V'(\phi) &=0, \\
\dot{\rho}_m + 3\gamma H\rho_m &=0, 
\end{align}
\end{subequations}
The rest of the equations are:
\begin{subequations}\label{eq:anisotropic_system}
\begin{align}
3H^2 + kK &= 3\sigma^2 + \rho_m + \tfrac{1}{2}\dot{\phi}^2 + V(\phi), \\
\dot{\sigma} &= -3H\sigma - \tfrac{kK}{3}, \\
\dot{H} &= -H^2 - 2\sigma^2 - \tfrac{1}{6}(3\gamma - 2)\rho_m - \tfrac{1}{3}\dot{\phi}^2 + \tfrac{1}{3}V(\phi), \\
\dot{e}_1 &= -(H - 2\sigma)e_1.
\end{align}
\end{subequations}

\subsection{LRS Bianchi V}

The Bianchi V metric is given by~\cite{Millano:2023vny}
\begin{equation}
ds^2 = -dt^2 + a^2(t)\,dx^2 + b^2(t)\,e^{2x} \left( dy^2 + \frac{a^4(t)}{b^4(t)} dz^2 \right),
\end{equation}
with Hubble and shear scalars defined as
\begin{equation}
H = \frac{\dot{a}}{a}, \qquad \sigma = \frac{\dot{a}}{a} - \frac{\dot{b}}{b}.
\end{equation}
The evolution equations of scalar field and matter are \eqref{eq:interacting}. The rest of the equations are 
\begin{subequations}\label{eq:BV_system}
\begin{align}
3H^2 & = \sigma^2 + \rho_m + V(\phi) + \frac{3}{a^2}, \\
\dot{\sigma} &= -3H\sigma , \\
\dot{H} &= -H^2 - 2\sigma^2 - \tfrac{1}{6}(3\gamma - 2)\rho_m - \tfrac{1}{3}\dot{\phi}^2 + \tfrac{1}{3}V(\phi), \\
\dot{a} &= aH, \\
\dot{b} &= b(H - \sigma).
\end{align}
\end{subequations}

\subsection{FLRW Geometries}

For isotropic FLRW models with curvature index $k = +1, 0, -1$, the metric is
\begin{equation}
ds^2 = -dt^2 + a^2(t)\left[\frac{dr^2}{1 - kr^2} + r^2(d\vartheta^2 + \sin^2\vartheta\, d\zeta^2)\right].
\end{equation}
The evolution equations of the scalar field and matter are \eqref{eq:interacting}. The rest of the equations are 
\begin{subequations}\label{eq:flrw}
\begin{align}
\dot{a} &= aH, \\
\dot{H} &= -\tfrac{1}{2}(\gamma\rho_m + \dot{\phi}^2) + \tfrac{k}{a^2}, \\
3H^2 &= \rho_m + \tfrac{1}{2}\dot{\phi}^2 + V(\phi) - \tfrac{3k}{a^2}.
\end{align}
\end{subequations}

\section{Averaging in Anisotropic Cosmology}
\label{Sect:3}

In scalar field cosmology, averaging suppresses nonlinear oscillations in the Klein–Gordon system and reveals asymptotic behavior. We apply this to models with generalized harmonic potentials
\begin{equation}
\label{gen-harmonic}
V(\phi) = \mu^2 \phi^2 + f^2(\omega^2 - 2\mu^2)\left(1 - \cos\left(\frac{\phi}{f}\right)\right), \quad \omega^2 > 2\mu^2,
\end{equation}
showing convergence between full and averaged dynamics in LRS Bianchi I, III, V, and open FLRW geometries.

The potential satisfies:
\begin{enumerate}
  \item $V \in C^\infty(\mathbb{R})$, with $\lim_{\phi \to \pm\infty} V(\phi) = +\infty$,
  \item Even symmetry: $V(\phi) = V(-\phi)$,
  \item Global minimum at $\phi = 0$: $V(0) = 0$, $V'(0) = 0$, $V''(0) = \omega^2 > 0$,
  \item Finite set of critical points $\phi_c \neq 0$ satisfying
  \begin{equation}
  2\mu^2 \phi_c + f(\omega^2 - 2\mu^2)\sin\left(\frac{\phi_c}{f}\right) = 0,
  \end{equation}
  \item Bounded extrema: $V_{\min} = 0$, $V_{\max} = \max_{\phi \in [-\phi_*, \phi_*]} V(\phi)$, with $\phi_* = \frac{f(\omega^2 - 2\mu^2)}{2\mu^2}$.
  \item Near $\phi = 0$: $V(\phi) \sim \tfrac{1}{2}\omega^2 \phi^2 + \mathcal{O}(\phi^3)$,
  \item As $|\phi| \to \infty$: $V(\phi) \sim \mu^2 \phi^2 + \mathcal{O}(1)$.
\end{enumerate}

Oscillatory potentials of the form $V(\phi)=V_0\cos(\phi/f)$ arise naturally when the scalar is a pseudo–Nambu–Goldstone boson: a broken global shift symmetry yields a periodic potential of period $2\pi f$. Near a minimum the potential is approximately harmonic, $V\simeq\tfrac{1}{2}m^2\psi^2$ with $\psi=\phi-\phi_{\min}$ and oscillation frequency $\omega\simeq m$; anharmonic corrections are controlled by the small parameter $\psi/f$. This clear separation between fast field oscillations and the slow background evolution driven by $H(t)$ makes such models well suited for averaging.

A commonly used axion potential is
\begin{equation}
V(\phi)=\mu^4\Big[1-\cos\!\big(\tfrac{\phi}{f}\big)\Big],
\end{equation}
which (with appropriate choices of $\mu,f$) models axion dark matter \cite{DAmico:2016jbm}. Variants appear in extensions such as the axionic Einstein–aether setup, e.g.
\begin{equation}
V(\phi,\Phi_*)=\frac{m_A^2\Phi_*^2}{2\pi^2}\Big[1-\cos\!\big(\tfrac{2\pi\phi}{\Phi_*}\big)\Big],
\end{equation}
whose minima occur at $\phi=n\Phi_*$ and which reduces to a quadratic form near each minimum, $V\approx\tfrac{1}{2}m_A^2\psi^2$ for small $\psi$ \cite{Balakin:2020coe}.

Multi‑field axion constructions introduce coupled periodic terms and richer resonance structure. For example,
\begin{equation}
\begin{aligned}
V(\phi_1,\phi_2)&=\mu_1^4\Big[1-\cos\!\big(\tfrac{\phi_1}{f_1}\big)\Big]
+\mu_2^4\Big[1-\cos\!\big(\tfrac{\phi_2}{f_2}\big)\Big]+\mu_3^4\Big[1-\cos\!\big(\tfrac{\phi_1}{f_1}-n\tfrac{\phi_2}{f_2}\big)\Big],
\end{aligned}
\end{equation}
studied with dynamical‑systems and averaging methods to derive effective Einstein–Klein–Gordon dynamics in the $H\to0$ regime \cite{Chakraborty:2021vcr}.

These examples illustrate the typical structure exploited by averaging: locally quadratic minima that support rapid, nearly harmonic oscillations, and periodic or weakly nonlinear corrections that can be averaged out to obtain an effective slow evolution for the cosmological background.

\subsection{Unified Averaging Framework for Scalar-Field Cosmologies}
\label{sect:3}

We introduce the normalized variables
\begin{align}\label{normalized-vars}
\Omega = \sqrt{\frac{\omega^2\phi^2+\dot\phi^2}{6H^2}}, \qquad
\Sigma = \frac{\sigma}{H}, \qquad
\Omega_k = \frac{k}{3H^2}, \qquad
\varphi = \omega t - \arctan\!\left(\frac{\omega\phi}{\dot\phi}\right).
\end{align}

Assume a slowly decaying Hubble parameter $H(t)>0$ with $\lim_{t\to\infty}H(t)=0$.  
In the asymptotic regime $H\to0$, the evolution of the normalized state vector
$\mathbf{x} = (\Omega,\Sigma,\Omega_k,\varphi)^T$ is governed by the expansion
\begin{equation}\label{eq:quasi_standard}
\dot{\mathbf{x}} = \mathbf{f}^0(\mathbf{x},\theta)
+ H(t)\,\mathbf{f}^1(\mathbf{x},\theta)
+ R(\mathbf{x},\theta,H(t)), 
\qquad
\dot{H} = f^{[2]}(\mathbf{x},\theta)\,H^2 + S(\mathbf{x},\theta,H(t)),
\end{equation}
where each $\mathbf{f}^i(\mathbf{x},\theta)$ is $C^1$ in $\mathbf{x}$ and $2\pi$-periodic in
$\theta=\omega t$, and the remainders $R,S$ are higher order in $H$.

The corresponding time averages are
\begin{equation}\label{timeavrg}
\bar{\mathbf{f}}^i(\mathbf{x})
:= \frac{1}{2\pi}\int_0^{2\pi}\mathbf{f}^i(\mathbf{x},\theta)\,d\theta,
\quad i=0,1,
\qquad
\bar{f}^{[2]}(\mathbf{x})
:= \frac{1}{2\pi}\int_0^{2\pi} f^{[2]}(\mathbf{x},\theta)\,d\theta.
\end{equation}
\begin{thm}[Averaging for Scalar-Field Cosmologies]
\label{thm:averaging_scalar_cosmology}
Let $H:[t_x,\infty) \to (0,\infty)$ be $C^1$, strictly decreasing, and satisfy $\lim_{t\to\infty} H(t) = 0$. Consider the system~\eqref{eq:quasi_standard} with $\theta = \omega t$ and $\mathbf{x}$ as in~\eqref{normalized-vars}. Assume: 
\begin{enumerate}[label=(A\arabic*), ref=A\arabic*]
\item \label{A1} \textbf{Smoothness and periodicity.}  
$\mathbf{f}^0$, $\mathbf{f}^1$, and $f^{[2]}$ are $C^1$ in $\mathbf{x}$ on an open set $U \subset \mathbb{R}^4$, and $2\pi$-periodic in $\theta$.

\item \label{A2} \textbf{Controlled remainders.}  
$R = \mathcal{O}(H^2)$ and $S = \mathcal{O}(H^3)$ uniformly on compact subsets of $U \times \mathbb{S}^1$.

\item \label{A3} \textbf{Well-defined averages.}  
The time averages defined in~\eqref{timeavrg} exist and are Lipschitz continuous on $U$.

\item \label{A4} \textbf{Matched initial data.}  
The full and averaged systems share initial data at $t = t_x$, with $H(t_x) = \varepsilon \ll 1$.

\item \label{A5} \textbf{Asymptotic stability.}  
The averaged system $\dot{\bar{\mathbf{x}}} = \bar{\mathbf{f}}^0(\bar{\mathbf{x}})$ admits an asymptotically stable equilibrium $\mathbf{x}_* \in U$.

\item \label{A6} \textbf{Frequency–amplitude scaling.}  
The frequency satisfies $\omega^{-1} = \mathcal{O}(\varepsilon) = \mathcal{O}(H(t_x))$, ensuring that boundary terms from integration by parts are absorbed into the remainders.
\end{enumerate}

Then the solutions $\mathbf{x}(t)$ and $\bar{\mathbf{x}}(t)$ with common initial data satisfy
\begin{equation}
\|\mathbf{x}(t) - \bar{\mathbf{x}}(t)\| = \mathcal{O}(H(t)) \qquad \text{as } t \to \infty,
\label{error}
\end{equation}
and both converge to $\mathbf{x}_*$.
\end{thm}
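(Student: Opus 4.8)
The plan is to treat~\eqref{eq:quasi_standard} as a system in near-standard form for averaging over the fast phase $\theta=\omega t$, and to produce an explicit near-identity conjugacy between the oscillatory flow and the autonomous averaged flow $\dot{\bar{\mathbf{x}}}=\bar{\mathbf{f}}^0(\bar{\mathbf{x}})$. First I would split $\mathbf{f}^0=\bar{\mathbf{f}}^0(\mathbf{x})+\tilde{\mathbf{f}}^0(\mathbf{x},\theta)$ into its $\theta$-mean and a zero-mean oscillatory remainder, and introduce the change of variables $\mathbf{x}=\mathbf{y}+\omega^{-1}\mathbf{u}(\mathbf{y},\theta)$ with $\mathbf{u}(\mathbf{y},\theta)=\int_0^{\theta}\tilde{\mathbf{f}}^0(\mathbf{y},s)\,ds$. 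By construction $\mathbf{u}$ is bounded and $2\pi$-periodic (the zero-mean property is exactly what makes the primitive periodic), so by~\ref{A6} we have $\|\mathbf{x}-\mathbf{y}\|=\mathcal{O}(\omega^{-1})=\mathcal{O}(H(t_x))$; the $C^1$ regularity~\ref{A1} guarantees that $\mathbf{u}$ and $\partial_{\mathbf{y}}\mathbf{u}$ are Lipschitz on compact subsets of $U$, so the transformation is invertible for $\omega$ large.

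Differentiating the change of variables along the flow and using $\dot\theta=\omega$, the term $\partial_\theta\mathbf{u}\,\dot\theta=\tilde{\mathbf{f}}^0$ cancels the zero-mean part of $\mathbf{f}^0$ exactly, leaving
\begin{equation}
\dot{\mathbf{y}}=\bar{\mathbf{f}}^0(\mathbf{y})+\mathcal{O}(H)+\mathcal{O}(\omega^{-1}),
\end{equation}
where the $\mathcal{O}(H)$ term collects $H\mathbf{f}^1$ and the Jacobian cross-term $\omega^{-1}\partial_{\mathbf{y}}\mathbf{u}\,\dot{\mathbf{y}}$, and the remainders $R,S$ enter only at $\mathcal{O}(H^2)$ by~\ref{A2}. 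The boundary contributions produced when the primitive $\mathbf{u}$ is integrated by parts against the slowly varying coefficients are of size $\omega^{-1}$ and are absorbed into the remainder precisely by the frequency--amplitude scaling~\ref{A6}. The Lipschitz property~\ref{A3} of $\bar{\mathbf{f}}^0$ then makes the averaged field on the right-hand side well posed.

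It remains to compare $\mathbf{y}(t)$ with the solution $\bar{\mathbf{x}}(t)$ of the exactly averaged system started from the common data~\ref{A4}. Writing $\mathbf{e}=\mathbf{y}-\bar{\mathbf{x}}$ and linearizing $\bar{\mathbf{f}}^0$ about the equilibrium supplied by~\ref{A5}, one obtains $\dot{\mathbf{e}}=A\mathbf{e}+\mathcal{O}(\|\mathbf{e}\|^2)+\mathcal{O}(H)$ with $A=D\bar{\mathbf{f}}^0(\mathbf{x}_*)$ Hurwitz. Variation of constants gives
\begin{equation}
\|\mathbf{e}(t)\|\le Ce^{-\lambda(t-t_x)}\|\mathbf{e}(t_x)\|+C\int_{t_x}^{t}e^{-\lambda(t-s)}\bigl(H(s)+\|\mathbf{e}(s)\|^2\bigr)\,ds,
\end{equation}
and because $H$ is slowly varying ($\dot H=\mathcal{O}(H^2)$ by~\eqref{eq:quasi_standard}, so $|\dot H|/H\to0$) the convolution of the exponentially decaying kernel against $H(s)$ is $\mathcal{O}(H(t)/\lambda)$. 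A continuation/bootstrap argument---keeping both trajectories inside a compact neighbourhood of $\mathbf{x}_*$ contained in $U$, where the quadratic term is dominated---then yields $\|\mathbf{e}(t)\|=\mathcal{O}(H(t))$ uniformly in $t$. Combining this with $\|\mathbf{x}-\mathbf{y}\|=\mathcal{O}(\omega^{-1})$ and~\ref{A6}, and letting $t\to\infty$ so that $H(t)\to0$, gives~\eqref{error} and the convergence of both $\mathbf{x}(t)$ and $\bar{\mathbf{x}}(t)$ to $\mathbf{x}_*$.

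The main obstacle is not the one-step near-identity estimate, which is classical, but upgrading it to a bound that is uniform on the infinite time horizon and that genuinely decays like $H(t)$. Ordinary averaging controls the error only on rescaled-time intervals of length $\mathcal{O}(1)$ (physical time $\mathcal{O}(\omega)$ or $\mathcal{O}(1/H)$); here the non-autonomy enters through the slowly decaying $H(t)$, so the fast phase, the drifting frequency ratio $\omega/H$, and the infinite horizon must be controlled simultaneously. The asymptotic stability~\ref{A5} is what prevents the $\mathcal{O}(H)$ forcing from accumulating, while the slow-variation property of $H$ is what converts the convolution estimate into a pointwise $\mathcal{O}(H(t))$ bound; verifying that the quadratic remainder in the bootstrap does not drive the trajectory out of the basin of $\mathbf{x}_*$ before the contraction takes over is the delicate technical point.
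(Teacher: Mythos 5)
Your proposal takes a genuinely different route from the paper. You construct the classical near-identity averaging transformation $\mathbf{x}=\mathbf{y}+\omega^{-1}\mathbf{u}(\mathbf{y},\theta)$ to cancel the zero-mean part of $\mathbf{f}^0$, and then close the error estimate by linearizing at $\mathbf{x}_*$ and convolving the $\mathcal{O}(H)$ forcing against an exponentially decaying kernel $e^{-\lambda(t-s)}$. The paper never changes variables and never linearizes at the equilibrium: it subtracts the full and averaged systems directly, applies variation of constants with the evolution operator bounded only through the Lipschitz constant of the averaged field ($\|\Phi(t,s)\|\le e^{L(t-s)}$, a \emph{growing} kernel), kills the oscillatory terms $\tilde{\mathbf{f}}^0$ and $H\tilde{\mathbf{f}}^1$ by integration by parts (boundary terms $\mathcal{O}(H/\omega)=\mathcal{O}(H^2)$ by (\ref{A6})), and closes a Gronwall-type bootstrap using only the monotone decay and eventual smallness of $H$; assumption (\ref{A5}) is invoked only at the very end, to conclude that $\bar{\mathbf{x}}(t)\to\mathbf{x}_*$ and hence $\mathbf{x}(t)\to\mathbf{x}_*$. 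Your first stage is sound and in fact mirrors the ``Smooth Transformation Near $H=0$'' theorems quoted later in the paper.

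The genuine gap is in your second stage: you take $A=D\bar{\mathbf{f}}^0(\mathbf{x}_*)$ to be Hurwitz, but (\ref{A5}) grants only \emph{asymptotic stability}, which does not imply hyperbolicity. This is not a harmless strengthening in the present context: the late-time attractors this framework is built to classify are frequently non-hyperbolic --- the paper explicitly relies on centre-manifold calculations for its attractor theorems, the classifications include the bifurcation values $\gamma=\tfrac23,1$ where eigenvalues vanish, and in the degenerate regime of Corollary~\ref{corollary} (the Bianchi~III application, $\mathbf{f}^0\equiv 0$) the Jacobian of the leading averaged field vanishes identically. Without a spectral gap there is no $e^{-\lambda(t-s)}$ kernel, your key estimate $\int_{t_x}^{t}e^{-\lambda(t-s)}H(s)\,ds=\mathcal{O}(H(t)/\lambda)$ has no analogue, and merely asymptotic (e.g.\ polynomial, centre-manifold-type) decay cannot stop the $\mathcal{O}(H)$ forcing from accumulating; so your uniform $\mathcal{O}(H(t))$ bound is established only under a hypothesis strictly stronger than (\ref{A5}). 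A secondary, repairable issue: linearizing about $\mathbf{x}_*$ from $t=t_x$ presumes both trajectories already lie in a small neighbourhood of $\mathbf{x}_*$, whereas (\ref{A4}) matches the data anywhere in $U$; you need a separate finite-time Gronwall estimate to propagate the $\mathcal{O}(H)$ error through the transient before the local argument can take over, and your ``continuation/bootstrap'' remark conflates these two regimes. The paper's Lipschitz-only estimate, whatever its own defects, is structured precisely so that the quantitative bound never depends on the spectrum at $\mathbf{x}_*$.
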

\begin{proof}
Decompose $\mathbf{f}^i = \bar{\mathbf{f}}^i + \tilde{\mathbf{f}}^i$ for $i = 0,1$, where $\tilde{\mathbf{f}}^i$ has zero average in $\theta$ as in~\eqref{timeavrg}. Define $\bar{R}(\mathbf{x}, H) := \tfrac{1}{2\pi} \int_0^{2\pi} R(\mathbf{x}, \theta, H)\,d\theta = \mathcal{O}(H^2)$ by~\ref{A2}.
Let $\bar{\mathbf{x}}(t)$ solve the averaged system
$\dot{\bar{\mathbf{x}}} = \bar{\mathbf{f}}^0(\bar{\mathbf{x}}) + H(t)\,\bar{\mathbf{f}}^1(\bar{\mathbf{x}}) + \bar{R}(\bar{\mathbf{x}}, H(t)),$
with initial data $\bar{\mathbf{x}}(t_x) = \mathbf{x}(t_x)$ by~\ref{A4}, and  error $\mathbf{e}(t) := \mathbf{x}(t) - \bar{\mathbf{x}}(t)$. 

Subtracting the systems yields
\begin{equation}
\dot{\mathbf{e}} = A(t)\,\mathbf{e} + \tilde{\mathbf{f}}^0(\mathbf{x}, \omega t) + H(t)\,\tilde{\mathbf{f}}^1(\mathbf{x}, \omega t) + \Delta_R(t),
\end{equation}
where $A(t)$ is the Jacobian of the averaged vector field, with $\|A(t)\| \le L$ by~\ref{A3}, and $\|\Delta_R(t)\| = \mathcal{O}(H^2) + \mathcal{O}(\|\mathbf{e}\| H)$.
By variation of constants,
\begin{equation}
\mathbf{e}(t) = \int_{t_x}^t \Phi(t,s)\left[\tilde{\mathbf{f}}^0(\mathbf{x}(s), \omega s) + H(s)\,\tilde{\mathbf{f}}^1(\mathbf{x}(s), \omega s) + \Delta_R(s)\right]\,ds,
\end{equation}
where $\Phi(t,s)$ is the evolution operator of the linearized system, satisfying $\|\Phi(t,s)\| \le e^{L(t-s)}$, and $\mathbf{e}(t_x) = 0$. The oscillatory integrals are estimated by integration by parts, using the $C^1$ regularity in $\theta$ from~\ref{A1} and boundedness of trajectories. This yields convolution terms bounded by $C_1 \int_{t_x}^t e^{L(t-s)} H(s)\,ds$, plus boundary terms of order $H/\omega$, which are $\mathcal{O}(H^2)$ by~\ref{A6} and absorbed into $\Delta_R$.
Thus, for some constant $C > 0$,
\begin{equation}
\|\mathbf{e}(t)\| \le C \left( \int_{t_x}^t e^{L(t-s)} H(s)\,ds + \int_{t_x}^t e^{L(t-s)} \|\mathbf{e}(s)\| H(s)\,ds \right).
\end{equation}

Choose $t_0 \ge t_x$ such that $C H(t_0) \le \tfrac{1}{2}L$, which is possible since $H(t) \to 0$. For $t \ge t_0$, monotonicity of $H$ implies
$\int_{t_0}^t e^{L(t-s)} H(s)\,ds \le H(t) \int_{t_0}^t e^{L(t-s)}\,ds \le \tfrac{1}{L} H(t),$
so
\begin{equation}
\sup_{s \in [t_0, t]} \|\mathbf{e}(s)\| \le \tfrac{C}{L} H(t) + \tfrac{C}{L} H(t) \sup_{s \in [t_0, t]} \|\mathbf{e}(s)\|.
\end{equation}
Choosing $t_0$ so that $\tfrac{C}{L} H(t_0) \le \tfrac{1}{2}$, we obtain
\begin{equation}
\sup_{s \in [t_0, t]} \|\mathbf{e}(s)\| \le \tfrac{2C}{L} H(t),
\end{equation}
\noindent
hence $\|\mathbf{e}(t)\| = \mathcal{O}(H(t))$ as $t \to \infty$. On the compact interval $[t_x, t_0]$, the error remains uniformly bounded by continuity and the smallness of $\varepsilon = H(t_x)$.

Finally, by~\ref{A5}, the averaged trajectory $\bar{\mathbf{x}}(t)$ converges to $\mathbf{x}_*$, and since $\|\mathbf{x}(t) - \bar{\mathbf{x}}(t)\| = \mathcal{O}(H(t))$, it follows that $\mathbf{x}(t) \to \mathbf{x}_*$ as well.
\end{proof}
\begin{corollary}[Degenerate Averaging Regime]
\label{corollary}
Assume \ref{A1}–\ref{A6} and suppose $\mathbf{f}^0 \equiv 0$. If the model parameters determine $\omega$ such that $\omega^{-1} = \mathcal{O}(\varepsilon)$, then for common initial data at $t = t_x$, the solutions $\mathbf{x}(t)$ and $\bar{\mathbf{x}}(t)$ satisfy the estimate~\eqref{error} and both converge to the attracting equilibrium $\mathbf{x}_*$ of the frozen slow field $\dot{\bar{\mathbf{x}}} = \bar{\mathbf{f}}^1(\bar{\mathbf{x}})$.
\end{corollary}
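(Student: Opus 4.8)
The plan is to treat the corollary as a \emph{slow-time} reduction of Theorem~\ref{thm:averaging_scalar_cosmology}. With $\mathbf{f}^0\equiv0$ the expansion \eqref{eq:quasi_standard} collapses to $\dot{\mathbf{x}}=H(t)\mathbf{f}^1(\mathbf{x},\theta)+R$ with $R=\mathcal{O}(H^2)$, so there is no autonomous drift at order $H^0$ and the entire slow evolution is carried by the $\mathcal{O}(H)$ term. First I would introduce the slow clock $\tau(t)=\int_{t_x}^{t} H(s)\,ds$, under which $d\mathbf{x}/d\tau=\mathbf{f}^1(\mathbf{x},\theta)+\mathcal{O}(H)$ while the phase becomes genuinely fast, $d\theta/d\tau=\omega/H\to\infty$ as $H\to0$. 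In this variable the leading averaged field is precisely $\bar{\mathbf{f}}^1$ of \eqref{timeavrg}, i.e.\ the frozen slow field $\dot{\bar{\mathbf{x}}}=\bar{\mathbf{f}}^1(\bar{\mathbf{x}})$, and the asymptotic stability hypothesis \ref{A5}, now \emph{read for $\bar{\mathbf{f}}^1$}, supplies the source of convergence that the vanishing $\bar{\mathbf{f}}^0$ can no longer provide.

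For the quantitative estimate I would mirror the Gr\"onwall argument of Theorem~\ref{thm:averaging_scalar_cosmology}, tracking that every rate now carries a factor $H$. Writing $\mathbf{e}=\mathbf{x}-\bar{\mathbf{x}}$ and splitting $\mathbf{f}^1=\bar{\mathbf{f}}^1+\tilde{\mathbf{f}}^1$, the error obeys
\begin{equation}
\dot{\mathbf{e}}=H(t)\,D\bar{\mathbf{f}}^1(\bar{\mathbf{x}})\,\mathbf{e}+H(t)\,\tilde{\mathbf{f}}^1(\mathbf{x},\omega t)+\Delta_R,\qquad \|\Delta_R\|=\mathcal{O}(H^2)+\mathcal{O}(H\|\mathbf{e}\|).
\end{equation}
The evolution operator $\Psi(t,s)$ of the homogeneous part satisfies $\|\Psi(t,s)\|\le K\exp\!\big(-\alpha\int_s^{t} H(\tau)\,d\tau\big)$ near $\mathbf{x}_*$, since the stability hypothesis makes $D\bar{\mathbf{f}}^1(\mathbf{x}_*)$ Hurwitz; the contraction thus acts on the slow-time scale. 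Variation of constants together with integration by parts on the zero-mean term $\tilde{\mathbf{f}}^1$, using the $C^1$ regularity \ref{A1}, produces boundary contributions of size $H/\omega=\mathcal{O}(\varepsilon H)$, which are of the target order \eqref{error} with a small prefactor set by \ref{A6}; the remaining convolution $\int_{t_x}^{t}\|\Psi(t,s)\|H(s)\,ds$ I would bound by $\mathcal{O}(H(t))$ via monotonicity of $H$ and the change of variables $s\mapsto\tau$. A slow-time Gr\"onwall inequality then closes the estimate and yields \eqref{error}. Convergence follows from the stability of the frozen field: once $\bar{\mathbf{x}}$ enters the basin of $\mathbf{x}_*$ (guaranteed by the matched data \ref{A4} together with \ref{A5} for $\bar{\mathbf{f}}^1$), the $H$-reparametrized flow drives $\bar{\mathbf{x}}(t)\to\mathbf{x}_*$, and $\|\mathbf{x}-\bar{\mathbf{x}}\|=\mathcal{O}(H)\to0$ transfers the limit to $\mathbf{x}(t)$.

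The hard part will be the interplay between the two slow scales. Both the homogeneous contraction and the forcing decay only as fast as $\int H$ accumulates, so obtaining $\mathcal{O}(H(t))$—rather than mere boundedness—requires balancing slow decay against slow contraction instead of invoking a uniform exponential rate; in particular the convolution estimate is sharp only when $H$ is slowly varying, so I would isolate that mild regularity explicitly. Relatedly, convergence to $\mathbf{x}_*$ genuinely needs the slow clock to diverge, $\int_{t_x}^{\infty} H(s)\,ds=\infty$: otherwise $\tau$ stalls and neither trajectory need reach the equilibrium. I would therefore either add this divergence condition or extract it from the frequency–amplitude scaling \ref{A6}, and I would confine the Hurwitz contraction estimate to a neighbourhood of $\mathbf{x}_*$, handling the approach to that neighbourhood through the basin argument above.
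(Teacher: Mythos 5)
Your proposal is correct, and its skeleton is the paper's own: the paper proves Corollary~\ref{corollary} in three lines by asserting that the argument of Theorem~\ref{thm:averaging_scalar_cosmology} applies ``with the roles of $\bar{\mathbf{f}}^0$ and $\bar{\mathbf{f}}^1$ interchanged,'' with \ref{A6} absorbing the boundary terms. But your execution is genuinely different, and it does more work in the right places. Where the paper reuses the theorem's evolution-operator bound $\|\Phi(t,s)\|\le e^{L(t-s)}$ (a growing bound coming from the Lipschitz constant in \ref{A3}), you pass to the slow clock $\tau=\int_{t_x}^{t}H(s)\,ds$ and extract a decaying bound $K\exp\bigl(-\alpha\int_s^t H\bigr)$ from the linearization of $\bar{\mathbf{f}}^1$ at $\mathbf{x}_*$; and where the paper asserts the integration-by-parts boundary terms are $\mathcal{O}(H^2)$, you book them as $H/\omega=\mathcal{O}(\varepsilon H)$, which is the honest pointwise order (they are $\mathcal{O}(H^2)$ only at $t=t_x$, though still harmless for the target bound \eqref{error}). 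Most importantly, you isolate two hypotheses the paper uses silently: \ref{A5} must be re-read for $\bar{\mathbf{f}}^1$ (as literally stated it concerns $\dot{\bar{\mathbf{x}}}=\bar{\mathbf{f}}^0(\bar{\mathbf{x}})$, which is vacuous when $\mathbf{f}^0\equiv0$), and the conclusion ``both converge to $\mathbf{x}_*$'' needs the slow clock to diverge, $\int_{t_x}^{\infty}H(s)\,ds=\infty$. The second point is not cosmetic: with $\mathbf{f}^0\equiv0$ the averaged flow is a $\tau$-reparametrization of $\dot{\bar{\mathbf{x}}}=\bar{\mathbf{f}}^1(\bar{\mathbf{x}})$, and a profile such as $H(t)=t^{-2}$ satisfies every stated hypothesis while $\tau$ stalls at a finite value, so the averaged trajectory freezes short of $\mathbf{x}_*$ and the convergence claim fails; the role interchange is therefore not symmetric, since in the theorem the drift is $\mathcal{O}(1)$ whereas here both the forcing and the contraction are $\mathcal{O}(H)$. (It holds in the intended cosmological applications, where $H\sim c/t$, but it must be stated, exactly as you do.) The one caution on your side: Hurwitzness of $D\bar{\mathbf{f}}^1(\mathbf{x}_*)$ is strictly stronger than the bare asymptotic stability of \ref{A5}, so either flag hyperbolicity as an added assumption or run the error estimate with the paper's Lipschitz/Gr\"onwall bound and reserve stability for the convergence step alone.
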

\begin{proof}
When $\mathbf{f}^0 \equiv 0$, the leading-order dynamics are governed by the averaged field $H \bar{\mathbf{f}}^1$. The same argument as in Theorem~\ref{thm:averaging_scalar_cosmology} applies, with the roles of $\bar{\mathbf{f}}^0$ and $\bar{\mathbf{f}}^1$ interchanged. The scaling condition $\omega^{-1} = \mathcal{O}(H(t_x))$ ensures that boundary terms from integration by parts remain $\mathcal{O}(H^2)$ and are absorbed into the remainders. The result follows.
\end{proof}
\subsubsection{Application to LRS Bianchi III Cosmology}

Let $\mathbf{x} = (\Omega, \Sigma, \Omega_k, \Phi)^T$ be the state vector for a scalar-field cosmology with matter. Imposing the modulation-frequency relation
$b \mu^3 + 2 f \mu^2 - f \omega^2 = 0$, eliminates the zeroth-order term $\mathbf{f}^0$, yielding the reduced system:
\begin{align}
\dot{\mathbf{x}} &= H {\footnotesize \left(\begin{array}{c}
\frac{1}{2} \Omega \Big(-3 (\gamma - 2) \Sigma^2 + (2 - 3\gamma) \Omega_k + 3(\Omega^2 - 1)(-\gamma + 2 \cos^2(t \omega - \Phi))\Big) \\
\frac{1}{2} \Bigg( \Omega_k ((2 - 3\gamma) \Sigma + 2) + 3 \Sigma \Big(-(\gamma - 2) \Sigma^2 + \gamma + \Omega^2 (-\gamma + 2 \cos^2(t \omega - \Phi)) - 2\Big)\Bigg) \\
\Omega_k \Big(-3\gamma(\Sigma^2 + \Omega^2 + \Omega_k - 1) + 6 \Sigma^2 - 2 \Sigma + 6 \Omega^2 \cos^2(t \omega - \Phi) + 2 \Omega_k - 2\Big) \\
-\frac{3}{2} \sin(2 t \omega - 2 \Phi)
\end{array}\right)} + \mathcal{O}(H^2), \label{equx2} \\
\dot{H} &= -\frac{3}{2} H^2 \Big(\gamma(1 - \Sigma^2 - \Omega_k - \Omega^2) + 2 \Sigma^2 + \frac{2}{3} \Omega_k + 2 \Omega^2 \cos^2(t \omega - \Phi) \Big) + \mathcal{O}(H^3), \label{EQ:81b}
\end{align}
To facilitate asymptotic analysis, we introduce the logarithmic time variable $\tau$, defined by $d\tau = H\,dt$. This transformation normalizes the system with respect to the expansion rate and isolates the slow evolution of the averaged variables. The averaged dynamics in Bianchi~III geometry then become:
\begin{subequations}
\label{eq:tau_bianchiIII}
\begin{align}
\frac{d\overline{\Omega}}{d\tau} &= \frac{1}{2} \overline{\Omega} \left[ -3\gamma(\overline{\Sigma}^2 + \overline{\Omega}^2 + \overline{\Omega}_k - 1) + 6\overline{\Sigma}^2 + 3\overline{\Omega}^2 + 2\overline{\Omega}_k - 3 \right], \label{tauIIIeq:Omega} \\
\frac{d\overline{\Sigma}}{d\tau} &= \frac{1}{2} \left\{ \overline{\Sigma} \left[ -3\gamma(\overline{\Sigma}^2 + \overline{\Omega}^2 + \overline{\Omega}_k - 1) + 6\overline{\Sigma}^2 + 3\overline{\Omega}^2 + 2\overline{\Omega}_k - 6 \right] + 2\overline{\Omega}_k \right\}, \label{tauIIIeq:Sigma} \\
\frac{d\overline{\Omega}_k}{d\tau} &= -\overline{\Omega}_k \left[ 3\gamma(\overline{\Sigma}^2 + \overline{\Omega}^2 + \overline{\Omega}_k - 1) - 6\overline{\Sigma}^2 + 2\overline{\Sigma} - 3\overline{\Omega}^2 - 2\overline{\Omega}_k + 2 \right], \label{tauIIIeq:Omegak}
\end{align}
\end{subequations}
In all geometric models considered, the evolution of the averaged angle remains trivial:
\begin{equation}
\frac{d\overline{\Phi}}{d\tau} = 0. \label{tauIIIeq:Phi}
\end{equation}
In this model we can set $\mathbf f^0(\mathbf x)\equiv 0$ with modulation-frequency relation $b\mu^3 + 2f\mu^2 - f\omega^2 = 0$. 
Then, applying Corollary \ref{corollary}, the full and averaged solutions satisfy
\eqref{error} as  $t \to \infty$. Smooth Transformation Near $H = 0$ are found in {\cite[Thm.~2]{Leon:2021lct}}: 
\begin{thm}[Smooth Transformation Near $H = 0$, {\cite[Thm.~2]{Leon:2021lct}}]
There exists a smooth transformation
\begin{equation}
\label{AppBIIIquasilinear211}
\mathbf{x} = \psi(\mathbf{x}_0) := \mathbf{x}_0 + H \mathbf{g}(H, \mathbf{x}_0, t),
\end{equation}
with $\mathbf{x}_0 = (\Omega_0, \Sigma_0, \Omega_{k0}, \Phi_0)^T$ and $\mathbf{g} \in C^1$, such that both full and averaged solutions converge to the same asymptotic state. The invariant set $\Sigma = 0$ recovers the negatively curved FLRW model.
\end{thm}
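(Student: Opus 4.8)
The plan is to build the change of variables explicitly as the solution of the homological equation for the oscillatory part of the leading field, and then to verify that in the new coordinates the flow coincides with the averaged system of Corollary~\ref{corollary} up to a genuinely $\order(H^2)$ remainder. Since the modulation-frequency relation $b\mu^3 + 2f\mu^2 - f\omega^2 = 0$ forces $\mathbf{f}^0 \equiv 0$, the leading slow field is $H\mathbf{f}^1$, and I would first split $\mathbf{f}^1 = \bar{\mathbf{f}}^1 + \tilde{\mathbf{f}}^1$ with $\tilde{\mathbf{f}}^1$ of zero $\theta$-average as in Theorem~\ref{thm:averaging_scalar_cosmology}. The generator $\mathbf{g}$ is then defined as the unique zero-mean antiderivative
\begin{equation}
\mathbf{g}(H,\mathbf{x}_0,t) = \frac{1}{\omega}\int_0^{\omega t}\tilde{\mathbf{f}}^1(\mathbf{x}_0,s)\,ds, \qquad \omega\,\partial_\theta\mathbf{g} = \tilde{\mathbf{f}}^1(\mathbf{x}_0,\theta),
\end{equation}
which is $2\pi$-periodic and bounded in $\theta$ (because $\tilde{\mathbf{f}}^1$ averages to zero) and inherits $C^1$ regularity in $\mathbf{x}_0$ from~\ref{A1}; this establishes $\mathbf{g}\in C^1$.

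Next I would check that $\psi:\mathbf{x}_0\mapsto\mathbf{x}_0 + H\mathbf{g}$ is a near-identity diffeomorphism on $U$. Its Jacobian is $I + H\,D_{\mathbf{x}_0}\mathbf{g}$, and since $\mathbf{g}$ and $D_{\mathbf{x}_0}\mathbf{g}$ are bounded while $\omega^{-1}=\order(\varepsilon)$ by~\ref{A6}, the correction $H\,D_{\mathbf{x}_0}\mathbf{g}$ has norm $<1$ once $H$ is small; the inverse function theorem then yields a smooth inverse $\psi^{-1}$ uniformly for $t\ge t_x$, together with $\|\mathbf{x}-\mathbf{x}_0\|=\order(H)$. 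Differentiating $\mathbf{x}=\mathbf{x}_0+H\mathbf{g}$ and inserting~\eqref{eq:quasi_standard}, the $\order(H)$ oscillatory contribution $H\omega\,\partial_\theta\mathbf{g}=H\tilde{\mathbf{f}}^1$ cancels the oscillatory part of $H\mathbf{f}^1$ by construction, leaving the transformed equation
\begin{equation}
\dot{\mathbf{x}}_0 = H\,\bar{\mathbf{f}}^1(\mathbf{x}_0) + \mathbf{R}_\ast(\mathbf{x}_0,\theta,H),
\end{equation}
where $\mathbf{R}_\ast$ collects $\dot H\,\mathbf{g}$, $H\,D_{\mathbf{x}}\mathbf{f}^1\cdot\mathbf{g}$, $H\,D_{\mathbf{x}_0}\mathbf{g}\cdot\dot{\mathbf{x}}_0$, $H\,\partial_H\mathbf{g}\,\dot H$ and the original $R$.

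The main obstacle is showing $\mathbf{R}_\ast=\order(H^2)$ \emph{uniformly} on $[t_x,\infty)$. Here~\ref{A6} is decisive: every term in $\mathbf{R}_\ast$ carries either an extra explicit power of $H$ (via $\dot H=\order(H^2)$ or $R=\order(H^2)$ from~\ref{A2}) or a factor $\omega^{-1}=\order(\varepsilon)$ inherited from $\mathbf{g}$, and the boundary contributions produced by the antiderivative are precisely the $H/\omega$ terms that~\ref{A6} absorbs into the remainder, as in the proof of Theorem~\ref{thm:averaging_scalar_cosmology}. A short bootstrap, using $\dot{\mathbf{x}}_0=\order(H)$ to close the estimate on $H\,D_{\mathbf{x}_0}\mathbf{g}\cdot\dot{\mathbf{x}}_0$, secures the claim. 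Once $\mathbf{R}_\ast=\order(H^2)$ is established, the transformed system has exactly the structure required by Corollary~\ref{corollary} (leading field $H\bar{\mathbf{f}}^1$, remainder $\order(H^2)$), so the Gr\"onwall argument there gives $\|\mathbf{x}_0(t)-\bar{\mathbf{x}}(t)\|=\order(H(t))$; composing with the near-identity $\psi$ transfers the estimate~\eqref{error} to $\mathbf{x}(t)$, and~\ref{A5} forces both to converge to the common equilibrium $\mathbf{x}_\ast$.

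Finally, for the isotropic limit I would pass to logarithmic time via $d\tau=H\,dt$ and examine the averaged system~\eqref{eq:tau_bianchiIII} on $\{\Sigma=0\}$. The required step is to confirm invariance of this set by checking that $d\bar\Sigma/d\tau$ vanishes identically there, i.e.\ that the curvature-sourcing term in \eqref{tauIIIeq:Sigma} does not drive trajectories off the subspace; on the invariant subspace equations~\eqref{tauIIIeq:Omega} and~\eqref{tauIIIeq:Omegak} collapse to a planar system in $(\bar\Omega,\bar\Omega_k)$, which I would then match term by term with the averaged evolution of the negatively-curved FLRW model, thereby identifying the restricted flow with the open-FLRW attractor structure on $\{\Sigma=0\}$.
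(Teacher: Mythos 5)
Your construction of the conjugacy is sound and is essentially the route the paper takes: the paper's entire justification consists of invoking Corollary~\ref{corollary} (after the modulation--frequency relation $b\mu^3+2f\mu^2-f\omega^2=0$ forces $\mathbf f^0\equiv0$) plus the remark that $\mathbf{g}$ is obtained by construction and is smooth, bounded, and periodic. Your homological equation $\omega\,\partial_\theta\mathbf{g}=\tilde{\mathbf{f}}^1$ with $\mathbf{g}$ the periodic antiderivative of the zero-mean part, the near-identity inverse via $I+H\,D_{\mathbf{x}_0}\mathbf{g}$, and the $\order(H^2)$ bookkeeping of $\dot H\,\mathbf{g}$, $H\,D_{\mathbf{x}}\mathbf{f}^1\cdot H\mathbf{g}$, and the boundary terms controlled by \ref{A6} are precisely the details hiding behind that remark; up to this point your proposal is a correct, more explicit version of the paper's argument.

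The gap is in your last step. You propose to establish the final sentence of the theorem by checking that $d\bar\Sigma/d\tau$ vanishes identically on $\{\bar\Sigma=0\}$, but this check fails against the paper's own averaged system: setting $\bar\Sigma=0$ in \eqref{tauIIIeq:Sigma} leaves $d\bar\Sigma/d\tau=\bar\Omega_k$, which is nonzero whenever $\bar\Omega_k\neq0$, so the curvature term \emph{does} drive trajectories off the shear-free subspace. This is not an averaging artifact but a feature of the field equations: for LRS Bianchi III one has $\dot\sigma=-3H\sigma-\tfrac{kK}{3}$ with $k=-1$, so spatial curvature sources shear, and the only invariant subset of $\{\Sigma=0\}$ is $\{\Sigma=0,\,\Omega_k=0\}$ --- the flat, not the negatively curved, FLRW reduction. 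Geometrically, open FLRW (hyperbolic slices) is not a special case of LRS Bianchi III (slices of type $\mathbb{R}\times H^2$); that is why both this paper and the cited reference treat open FLRW as a separate system with state vector $(\Omega,\Omega_k,\Phi)$ in the following subsubsection rather than as an invariant set of \eqref{eq:tau_bianchiIII}. (A shear-free invariant set recovering open FLRW does exist for LRS Bianchi V, where $\dot\sigma=-3H\sigma$.) So your plan cannot ``confirm'' the invariance; carried out honestly, it refutes the claim as rendered here, and the open-FLRW conclusion must instead be imported from the separate open-FLRW analysis.
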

We obtained $\mathbf{g}$ by construction, and it is smooth, bounded, and periodic. A linear stability analysis, the Centre Manifold calculations, and combined with previous results, lead to: 
\begin{thm}[Late-Time Attractors, {\cite[Thm.~3]{Leon:2021lct}}]
For barotropic index $ \gamma \in (0,2] $, the attractors are: (i) matter-dominated FLRW universe $ \mathcal{F}_0 $, for $ 0 < \gamma \leq 2/3 $; (ii) matter-curvature scaling solution $ \mathcal{MC} $, for $ 2/3 < \gamma < 1 $; (iii) Bianchi~III flat spacetime $ \mathcal{D} $, for $ 1 \leq \gamma \leq 2 $.
\end{thm}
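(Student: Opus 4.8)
The plan is to analyze the averaged Bianchi~III system \eqref{eq:tau_bianchiIII} as an autonomous polynomial flow on the bounded invariant region carved out by the Friedmann constraint, noting first that \eqref{tauIIIeq:Phi} makes $\overline{\Phi}$ a cyclic coordinate that decouples; the essential dynamics therefore reduce to the three-dimensional $(\overline{\Omega},\overline{\Sigma},\overline{\Omega}_k)$ block. I would begin by locating all equilibria, setting the three right-hand sides to zero. The product structure is helpful here: each equation carries either an overall factor $\overline{\Omega}$ or $\overline{\Omega}_k$, or factors through a linear term in $\overline{\Sigma}$, so the fixed-point set organizes naturally along the invariant subspaces $\{\overline{\Omega}=0\}$, $\{\overline{\Omega}_k=0\}$ and their intersections. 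This should return the three named candidates: the isotropic matter point $\mathcal{F}_0=(0,0,0)$, the matter--curvature scaling solution $\mathcal{MC}$ (with $\overline{\Omega}=0$, $\overline{\Omega}_k\neq0$), and the Bianchi~III flat solution $\mathcal{D}$.

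Next I would linearize at each equilibrium and read off the eigenvalues of the Jacobian as explicit functions of $\gamma$. At $\mathcal{F}_0$ the Jacobian is triangular, with diagonal entries $\tfrac{3}{2}(\gamma-1)$, $\tfrac{3}{2}(\gamma-2)$ and $3\gamma-2$; hence $\mathcal{F}_0$ is a sink precisely for $0<\gamma<\tfrac{2}{3}$, the curvature eigenvalue $3\gamma-2$ being the one that vanishes at $\gamma=\tfrac{2}{3}$. This already fixes threshold (i) and signals an exchange of stability with $\mathcal{MC}$ there. Carrying out the same computation at $\mathcal{MC}$ should reveal an eigenvalue changing sign at $\gamma=1$, transferring stability to $\mathcal{D}$, which I expect to be hyperbolically stable throughout $1<\gamma\le2$. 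Tabulating the signs of the real parts across the two windows then delivers the hyperbolic core of the classification.

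Because the thresholds $\gamma=\tfrac{2}{3}$ and $\gamma=1$ are exactly where an eigenvalue crosses zero, linearization is inconclusive there, and I would invoke the centre manifold reduction already flagged in the text to settle the marginal cases, confirming that $\gamma=\tfrac{2}{3}$ belongs to the $\mathcal{F}_0$ regime and $\gamma=1$ to the $\mathcal{D}$ regime. Finally, to upgrade these local sink statements to the global \emph{attractor} claim, I would seek a monotone (Lyapunov/LaSalle-type) function on the constrained phase space---a natural candidate combines $\overline{\Omega}^2$ with the matter density $\Omega_m$ fixed by the Friedmann constraint---to exclude interior periodic orbits and spurious $\omega$-limit sets and force convergence onto the relevant sink in each $\gamma$-window.

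The step I expect to be the genuine obstacle is this last one together with the degenerate thresholds: the centre manifold computations at $\gamma=\tfrac{2}{3}$ and $\gamma=1$ are delicate, and promoting local stability to global attraction requires controlling the flow on the boundary faces $\{\overline{\Omega}=0\}$ and $\{\overline{\Omega}_k=0\}$, whose own saddle equilibria have unstable manifolds that must be shown to feed into the claimed attractor rather than into some other limit set.
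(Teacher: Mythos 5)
Your proposal is correct and follows essentially the same route as the paper: the paper (deferring to \cite[Thm.~3]{Leon:2021lct}) obtains the classification precisely by a linear stability analysis of the averaged system \eqref{eq:tau_bianchiIII}, centre-manifold calculations at the non-hyperbolic thresholds $\gamma=\tfrac{2}{3}$ and $\gamma=1$, and the averaging/near-identity-transformation results to transfer the averaged attractors to the full oscillatory flow. Your explicit Jacobian at $\mathcal{F}_0$ (eigenvalues $\tfrac{3}{2}(\gamma-1)$, $\tfrac{3}{2}(\gamma-2)$, $3\gamma-2$) and the exchange-of-stability picture among $\mathcal{F}_0$, $\mathcal{MC}$, $\mathcal{D}$ match that strategy, with your Lyapunov/LaSalle step being a reasonable way to make explicit the global attraction that the cited reference likewise addresses.
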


\subsubsection{Open FLRW ($ k = -1 $)}
The cosmological equations for open FLRW ($ k = -1 $) can be written as 
\begin{align}
& \dot{\mathbf{x}}= H \mathbf{f}(t, \mathbf{x}) + \mathcal{O}(H^2), \;   \mathbf{x}= \left(\Omega, \Omega_k, \Phi \right)^T,\\
   &  \dot{H}=-H^2\Bigg[ \frac{1}{2} \left(3
   \gamma\left(1-\Omega^2- \Omega_{k}\right)+2 \Omega_{k}\right)  +3 \Omega^2 \cos ^2(t \omega -\Phi)\Bigg]+  \mathcal{O}(H^3),
\\
    \label{EQ:50}
 &  f(t, \mathbf{x}) = 
   {\footnotesize \left(\begin{array}{c}
   \frac{1}{2}   \Omega  \left(3 \gamma -3 \gamma  \left(\Omega^2+\Omega_{k}\right)+2 \Omega_{k}\right)  +3  \Omega \left(\Omega^2-1\right) \cos ^2(t \omega -\Phi) \\
- \Omega_{k} \left(3 \gamma  \Omega^2+(3 \gamma -2) (\Omega_{k}-1)\right)   +6  \Omega^2 \Omega_{k} \cos ^2(t \omega -\Phi)\\
    -\frac{3}{2} \sin (2 t \omega -2 \Phi)
      \end{array}
   \right)}.
\end{align}
Replacing $\dot{\mathbf{x}}= H \mathbf{f}(t, \mathbf{x})$ with  $\mathbf{f}(t, \mathbf{x})$ defined by \eqref{EQ:50} with $\dot{\mathbf{y}}= H  \overline{f}(\mathbf{y})$, $\mathbf{y}= \left(\overline{\Omega}, {\overline{\Omega}_{k}}, \overline{\Phi} \right)^T$  and using the time averaging \eqref{timeavrg} we obtain the time-averaged system: 
\begin{subequations}
\label{eq:tau_system}
\begin{align}
\frac{d\overline{\Omega}}{d\tau} &= -\frac{1}{2} \overline{\Omega} \left[ 3(\gamma - 1)(\overline{\Omega}^2 - 1) + (3\gamma - 2)\, \overline{\Omega}_k \right], \label{eq:tau_Omega} \\
\frac{d\overline{\Omega}_k}{d\tau} &= -\overline{\Omega}_k \left[ 3(\gamma - 1)\overline{\Omega}^2 - 3\gamma + (3\gamma - 2)\, \overline{\Omega}_k + 2 \right].\label{eq:tau_Omegak}
\end{align}
\end{subequations}
The evolution of the averaged angle remains trivial \eqref{tauIIIeq:Phi}.
Under assumptions~\hyperref[A1]{(A1)}–\hyperref[A3]{(A3)}, the full and averaged solutions satisfy
\eqref{error} as  $t \to \infty$, with modulation-frequency relation $b\mu^3 + 2f\mu^2 - f\omega^2 = 0$ (Thm.~ \ref{thm:averaging_scalar_cosmology}). Moreover, late-time attractors, {\cite[Thm.~4]{Leon:2021lct}} are identified: 
\begin{thm}[Late-Time Attractors, {\cite[Thm.~4]{Leon:2021lct}}]
For barotropic index $ \gamma \in (0,2] $, the attractors are: (i) matter-dominated FLRW universe $ \mathcal{F}_0 $, for $ 0 < \gamma \leq 2/3 $; (ii) Milne solution $ \mathcal{C} $, for $ 2/3 < \gamma < 2 $.
\end{thm}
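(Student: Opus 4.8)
The plan is to analyze the averaged system \eqref{eq:tau_system} as an autonomous planar flow in $(\overline\Omega,\overline\Omega_k)$, since the angular equation \eqref{tauIIIeq:Phi} decouples and freezes $\overline\Phi$; the late-time cosmological content is carried entirely by $(\overline\Omega,\overline\Omega_k)$, and each asserted attractor is really a one-parameter circle in the full variables whose projection to the plane is a single point. First I would fix the physical state space. Combining the Friedmann constraint in \eqref{eq:flrw} with the normalization \eqref{normalized-vars} gives $\Omega_m = 1-\overline\Omega^2-\overline\Omega_k$ with $\Omega_m\ge 0$, so the admissible region is the compact set $\mathcal S=\{\overline\Omega\ge 0,\ \overline\Omega_k\ge 0,\ \overline\Omega^2+\overline\Omega_k\le 1\}$. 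I would then verify that the three boundary arcs $\overline\Omega=0$, $\overline\Omega_k=0$ and $\Omega_m=0$ are each invariant, so that $\mathcal S$ is forward invariant and every orbit stays in a compact set.

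Next I would enumerate the equilibria by setting both right-hand sides to zero. Factoring the two vector-field components shows that the only fixed points in $\mathcal S$ are $\mathcal F_0=(0,0)$ (matter-dominated FLRW), $\mathcal C=(0,1)$ (curvature-dominated Milne), and a scalar-field vertex $(1,0)$: demanding that both bracketed factors vanish simultaneously reduces to the contradiction $1=0$, so no interior equilibrium exists. I would then linearize. The Jacobian is triangular at each vertex, so the eigenvalues read off directly: at $\mathcal F_0$ they are $\tfrac32(\gamma-1)$ and $3\gamma-2$; at $\mathcal C$ they are $-\tfrac12$ and $2-3\gamma$; at $(1,0)$ they are $3(1-\gamma)$ and $1$. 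Hyperbolic stability then yields the stated thresholds: $\mathcal F_0$ is a sink precisely for $\gamma<\tfrac23$, $\mathcal C$ is a sink precisely for $\gamma>\tfrac23$, while $(1,0)$ always carries the positive eigenvalue $1$ and is never an attractor.

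To upgrade local stability to the global attractor claims, I would exploit planarity. Since $\mathcal S$ is compact and invariant and contains \emph{no} interior equilibrium, index theory forbids any periodic orbit lying in the interior (a closed orbit must enclose an equilibrium), so by the Poincar\'e--Bendixson theorem every interior orbit has its $\omega$-limit set on the boundary. Analyzing the flow direction along the invariant boundary arcs then identifies the unique sink as the global attractor for each $\gamma$-regime and rules out trapping in a boundary heteroclinic cycle; if needed, a Dulac function of the form $\overline\Omega^{\,a}\overline\Omega_k^{\,b}$ or a monotone quantity built from $\Omega_m$ would reinforce the exclusion of closed orbits. Finally I would translate each sink back to its geometric label, confirming $\mathcal F_0$ as the matter-dominated flat FLRW state and $\mathcal C$ as the Milne universe.

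The main obstacle will be the threshold $\gamma=\tfrac23$, where $\mathcal F_0$ and $\mathcal C$ each acquire a zero eigenvalue and exchange stability, so the linearization is inconclusive. There I would carry out a center-manifold reduction: restrict the flow to the one-dimensional manifold tangent to the degenerate eigendirection and read off the sign of the leading nonlinear coefficient to confirm that $\mathcal F_0$ remains (weakly) attracting at $\gamma=\tfrac23$, consistent with the closed endpoint in $0<\gamma\le\tfrac23$, and that $\mathcal C$ becomes attracting just above it. The analogous non-hyperbolic point $\gamma=1$ at $(1,0)$ is settled more cheaply, since its transverse eigenvalue stays equal to $1$ and the vertex remains a saddle; the stiff endpoint $\gamma=2$ is excluded from the generic classification as a degenerate case in which the kinetic scalar and the fluid share the same equation of state.
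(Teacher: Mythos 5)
Your analysis of the generic cases is correct, and I verified the key computations: the equilibria of \eqref{eq:tau_system} in the physical region are exactly $\mathcal F_0=(0,0)$, $\mathcal C=(0,1)$ and $(1,0)$ (the two bracketed factors cannot vanish simultaneously, their difference being the constant $1$); the Jacobians are triangular with eigenvalues $\{\tfrac32(\gamma-1),\,3\gamma-2\}$, $\{-\tfrac12,\,2-3\gamma\}$ and $\{3(1-\gamma),\,1\}$ respectively; and the arcs $\overline\Omega=0$, $\overline\Omega_k=0$, $\overline\Omega^2+\overline\Omega_k=1$ are invariant. Note that the paper itself proves nothing here: the theorem is imported from \cite[Thm.~4]{Leon:2021lct}, and the surrounding framework invokes only linear stability and centre-manifold computations of the averaged flow. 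Your Poincar\'e--Bendixson/no-interior-equilibrium argument is therefore a genuine strengthening, upgrading the local ``sink'' classification to global interior attraction; moreover, in each generic regime the stable manifold of the relevant saddle ($(0,1)$ for $\gamma<\tfrac23$, $(0,0)$ for $\tfrac23<\gamma<1$, $(1,0)$ for $\gamma>1$) is tangent to an invariant boundary arc and hence coincides with it, so no exceptional interior orbits escape the sink.

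The genuine gap is the endpoint $\gamma=\tfrac23$, which the statement includes in regime (i), and where the outcome you anticipate from the centre-manifold reduction is wrong. At $\gamma=\tfrac23$ one has $3\gamma-2=0$ and \eqref{eq:tau_system} collapses to
\begin{equation}
\frac{d\overline{\Omega}}{d\tau}=\tfrac12\,\overline{\Omega}\,\bigl(\overline{\Omega}^2-1\bigr),
\qquad
\frac{d\overline{\Omega}_k}{d\tau}=\overline{\Omega}^2\,\overline{\Omega}_k ,
\end{equation}
so the entire segment $\{\overline\Omega=0,\ 0\le\overline\Omega_k\le1\}$ is a line of equilibria. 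The centre manifold of $\mathcal F_0$ is exactly this axis, and the reduced dynamics on it vanishes identically: there is no ``leading nonlinear coefficient'' whose sign can be read off, and $\mathcal F_0$ is Lyapunov stable but not asymptotically stable. In fact the orbits integrate in closed form, since $d\overline\Omega_k/d\overline\Omega=2\overline\Omega\,\overline\Omega_k/(\overline\Omega^2-1)$ gives $\overline\Omega_k=C\,(1-\overline\Omega^2)$; the interior orbit through $(\Omega_0,\Omega_{k0})$ therefore converges to $\bigl(0,\ \Omega_{k0}/(1-\Omega_0^2)\bigr)\neq\mathcal F_0$ whenever $\Omega_{k0}>0$. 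Thus at $\gamma=\tfrac23$ the attracting set of the averaged planar flow is the whole segment (reflecting the physical degeneracy that $\gamma=\tfrac23$ matter redshifts exactly like spatial curvature), and the closed endpoint of the theorem cannot be produced by your proposed step; it requires either a separate argument interpreting this degenerate family or deference to the treatment in \cite{Leon:2021lct}. A similar degeneracy occurs at $\gamma=1$, where $\overline\Omega_k=0$ becomes a line of equilibria, but there it is harmless for your conclusion because that line repels ($d\overline\Omega_k/d\tau>0$ nearby) and $\mathcal C$ remains a hyperbolic sink.
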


\subsubsection{Asymptotic Dynamics and Averaging Results in LRS Bianchi I}

The full system in LRS Bianchi I geometry admits the expansion:
\begin{align}
\dot{\mathbf{x}} &= H \mathbf{f}(\mathbf{x}, t) + \mathcal{O}(H^2), \label{equx} \\
\dot{H} &= -\frac{3}{2} H^2 \left[ \gamma (1 - \Sigma^2 - \Omega^2) + 2\Sigma^2 + 2\Omega^2 \cos^2(\Phi - \omega t) \right] + \mathcal{O}(H^3), \label{EQ:61b}
\end{align}
where \eqref{EQ:61b} is the Raychaudhuri equation and the oscillatory vector field is given by:
\begin{align}
\label{EQ:87}
\mathbf{f}(\mathbf{x}, t) =
\begin{pmatrix}
\frac{3}{2} \Omega \left[ \gamma (1 - \Sigma^2 - \Omega^2) + 2\Sigma^2 + 2(\Omega^2 - 1) \cos^2(\Phi - \omega t) \right] \\\\
\frac{3}{2} \Sigma \left[ -\gamma (\Sigma^2 + \Omega^2 - 1) + 2\Sigma^2 + 2\Omega^2 \cos^2(\Phi - \omega t) - 2 \right] \\\\
-\frac{3}{2} \sin(2\omega t - 2\Phi)
\end{pmatrix}.
\end{align}

Replacing $\dot{\mathbf{x}} = H \mathbf{f}(\mathbf{x}, t)$ with its time-averaged counterpart $\dot{\mathbf{y}} = H \bar{\mathbf{f}}(\mathbf{y})$, where $\mathbf{y} = (\bar{\Omega}, \bar{\Sigma}, \bar{\Phi})^T$, we obtain the averaged system:
\begin{subequations}
\label{eq:tau_bianchiI}
\begin{align}
\frac{d\bar{\Omega}}{d\tau} &= \frac{3}{2} \bar{\Omega} \left[ \gamma (1 - \bar{\Sigma}^2 - \bar{\Omega}^2) + 2\bar{\Sigma}^2 + \bar{\Omega}^2 - 1 \right], \label{taueq:Omega} \\
\frac{d\bar{\Sigma}}{d\tau} &= \frac{3}{2} \bar{\Sigma} \left[ \gamma (1 - \bar{\Sigma}^2 - \bar{\Omega}^2) + 2\bar{\Sigma}^2 + \bar{\Omega}^2 - 2 \right]. \label{taueq:Sigma}
\end{align}
\end{subequations}
The evolution of the averaged angle remains trivial \eqref{tauIIIeq:Phi}. 
\begin{thm}[Smooth Transformation Near $H = 0$, {\cite[Thm.~ 1]{Leon:2021rcx}}]
There exists a smooth transformation
\begin{equation}
\mathbf{x}(t) = \bar{\mathbf{x}}(t) + H(t)\,\mathbf{g}(H, \bar{\mathbf{x}}, t),
\end{equation}
such that the full and averaged systems share the same asymptotic behavior.
\end{thm}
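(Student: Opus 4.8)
The plan is to construct the transformation explicitly as a \emph{near-identity} (Krylov--Bogoliubov) change of variables whose generator $\mathbf{g}$ annihilates the oscillatory part of the Bianchi~I vector field at order $H$. First I would split the field~\eqref{EQ:87} into mean and oscillatory parts, $\mathbf{f}(\mathbf{x},t) = \bar{\mathbf{f}}(\mathbf{x}) + \tilde{\mathbf{f}}(\mathbf{x},t)$, where $\bar{\mathbf{f}}$ is the average~\eqref{timeavrg} and $\tilde{\mathbf{f}}$ has zero mean in the fast phase $\theta = \omega t$. For this model the oscillatory content is completely explicit: using $\cos^2(\Phi-\omega t) = \tfrac12 + \tfrac12\cos\!\big(2(\Phi-\omega t)\big)$, the surviving terms are multiples of $\cos\!\big(2(\Phi-\omega t)\big)$ and $\sin\!\big(2(\omega t-\Phi)\big)$, so $\tilde{\mathbf{f}}$ integrates in closed form.

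Then I would define $\mathbf{g}$ as the zero-mean antiderivative of $\tilde{\mathbf{f}}$ along the fast time,
\begin{equation}
\mathbf{g}(H,\bar{\mathbf{x}},t) := \int^t \tilde{\mathbf{f}}(\bar{\mathbf{x}},s)\,ds,
\end{equation}
with the integration constant fixed so that $\mathbf{g}$ itself has zero mean. Because $\tilde{\mathbf{f}}$ is periodic with vanishing average, $\mathbf{g}$ is bounded, periodic in $t$, and $C^1$ in $\bar{\mathbf{x}}$ by~\ref{A1}; moreover $\mathbf{g} = \mathcal{O}(\omega^{-1}) = \mathcal{O}(\varepsilon)$ by~\ref{A6}, and by construction $\partial_t \mathbf{g} = \tilde{\mathbf{f}}(\bar{\mathbf{x}},t)$. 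Substituting $\mathbf{x} = \bar{\mathbf{x}} + H\mathbf{g}$ and differentiating gives $\dot{\mathbf{x}} = \dot{\bar{\mathbf{x}}} + \dot H\,\mathbf{g} + H\big(\partial_H \mathbf{g}\,\dot H + \partial_{\bar{\mathbf{x}}}\mathbf{g}\,\dot{\bar{\mathbf{x}}} + \partial_t \mathbf{g}\big)$. Imposing that $\bar{\mathbf{x}}$ solve the averaged flow $\dot{\bar{\mathbf{x}}} = H\bar{\mathbf{f}}(\bar{\mathbf{x}})$, the term $H\partial_t\mathbf{g} = H\tilde{\mathbf{f}}(\bar{\mathbf{x}},t)$ cancels exactly the oscillatory part of $H\mathbf{f}$ evaluated at $\bar{\mathbf{x}}$, while $\dot H = \mathcal{O}(H^2)$ from~\eqref{EQ:61b} makes $\dot H\,\mathbf{g}$ and $H\partial_{\bar{\mathbf{x}}}\mathbf{g}\,\dot{\bar{\mathbf{x}}}$ higher order; the discrepancy $H\big[\tilde{\mathbf{f}}(\mathbf{x},t)-\tilde{\mathbf{f}}(\bar{\mathbf{x}},t)\big]$ is controlled by the Lipschitz bound of~\ref{A3} as $\mathcal{O}\big(H\,\|\mathbf{x}-\bar{\mathbf{x}}\|\big) = \mathcal{O}(H^2)$. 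Collecting everything, the transformed equation reduces to $\dot{\bar{\mathbf{x}}} = H\bar{\mathbf{f}}(\bar{\mathbf{x}}) + \mathcal{O}(H^2)$, i.e. the averaged system up to a controlled remainder.

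To close the statement I would invoke Theorem~\ref{thm:averaging_scalar_cosmology}: since the near-identity displacement $\mathbf{x}-\bar{\mathbf{x}} = H\mathbf{g}$ is $\mathcal{O}(H)$ and $H(t)\to 0$, the estimate~\eqref{error} holds and both trajectories converge to the same equilibrium $\mathbf{x}_*$ of the averaged flow~\eqref{eq:tau_bianchiI}. I would also verify that $\bar{\mathbf{x}}\mapsto \bar{\mathbf{x}} + H\mathbf{g}$ is a genuine change of variables for small $H$: since $\partial_{\bar{\mathbf{x}}}\mathbf{g}$ is bounded, $\mathrm{Id} + H\,\partial_{\bar{\mathbf{x}}}\mathbf{g}$ is invertible by a Neumann-series / inverse-function-theorem argument once $H$ is small enough, which holds asymptotically.

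The hard part will be making the order counting uniform on the non-compact region where the flow approaches the attractor. The generator $\mathbf{g}$ and its Jacobian must stay bounded along trajectories limiting to $\mathbf{x}_*$, and the remainder must remain uniformly $\mathcal{O}(H^2)$ there, not merely on fixed compact sets; this is precisely where the frequency--amplitude scaling~\ref{A6} is needed, since it ties the $\mathcal{O}(\omega^{-1})$ size of $\mathbf{g}$ to $\varepsilon = H(t_x)$ and prevents the $1/\omega$ factors from reintroducing $\mathcal{O}(H)$ errors into the conjugacy. Establishing invertibility and the uniform remainder bound simultaneously, rather than treating them on disjoint compact pieces, is the main technical obstacle.
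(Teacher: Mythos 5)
Your proposal is correct and follows essentially the same route as the paper and its cited source: an explicit Krylov--Bogoliubov construction of the generator $\mathbf{g}$ as the zero-mean fast-time antiderivative of the oscillatory part $\tilde{\mathbf{f}}$ (the paper itself notes ``we obtained $\mathbf{g}$ by construction, and it is smooth, bounded, and periodic''), followed by an appeal to the unified averaging result for the error bound and the shared late-time asymptotics. The only correction worth making is that for LRS Bianchi I the leading-order term $\mathbf{f}^0$ vanishes identically, so the closing step should invoke Corollary~\ref{corollary} (the degenerate regime, where the slow flow is $\dot{\bar{\mathbf{x}}} = H\bar{\mathbf{f}}^1(\bar{\mathbf{x}})$, i.e.\ the $\tau$-time system~\eqref{eq:tau_bianchiI}) rather than Theorem~\ref{thm:averaging_scalar_cosmology} directly, whose hypothesis~\ref{A5} is stated for $\bar{\mathbf{f}}^0$.
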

\begin{thm}[Late-Time Attractors, {\cite[Thm.~2]{Leon:2021rcx}}]
For barotropic index $\gamma \in (0,2]$, the attractors are:
(i) $\mathcal{F}_0$: flat matter-dominated FLRW universe, for $0 < \gamma < 1$,
(ii) $\mathcal{F}$: scalar-field dominated solution, for $1 < \gamma \leq 2$.
\end{thm}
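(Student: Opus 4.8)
The plan is to analyze the planar averaged system \eqref{eq:tau_bianchiI} directly, since \eqref{tauIIIeq:Phi} shows that $\bar{\Phi}$ decouples and is constant, leaving a two-dimensional flow in the $(\bar{\Omega},\bar{\Sigma})$ plane. First I would fix the physical state space: the Friedmann constraint from \eqref{eq:anisotropic_system} with $k=0$ normalizes to $\bar{\Omega}^2+\bar{\Sigma}^2+\bar{\Omega}_m=1$ with $\bar{\Omega}_m\ge 0$ and $\bar{\Omega}\ge 0$ (the latter by definition of $\Omega$), so the invariant region is the compact half-disk $\{\bar{\Omega}\ge 0,\ \bar{\Omega}^2+\bar{\Sigma}^2\le 1\}$. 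Compactness together with invariance is what will ultimately license a Poincar\'e--Bendixson argument.

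Next I would locate all equilibria. Writing the right-hand sides as $\tfrac{3}{2}\bar{\Omega}\,B_\Omega$ and $\tfrac{3}{2}\bar{\Sigma}\,B_\Sigma$ with $B_\Omega=\gamma(1-\bar{\Sigma}^2-\bar{\Omega}^2)+2\bar{\Sigma}^2+\bar{\Omega}^2-1$ and $B_\Sigma=B_\Omega-1$, the key observation is that $B_\Sigma-B_\Omega\equiv -1$, so $B_\Omega$ and $B_\Sigma$ can never vanish simultaneously; hence there is no interior equilibrium with both $\bar{\Omega}\neq 0$ and $\bar{\Sigma}\neq 0$. The equilibria are therefore $\mathcal{F}_0=(0,0)$, the scalar-field point $\mathcal{F}=(1,0)$ (from $B_\Omega=0$ on $\bar{\Sigma}=0$), and the shear-dominated points $(0,\pm 1)$ (from $B_\Sigma=0$ on $\bar{\Omega}=0$). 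Since the coordinate axes $\bar{\Omega}=0$ and $\bar{\Sigma}=0$ are invariant, the Jacobian at each of these points is diagonal, so the eigenvalues can be read off without solving a characteristic polynomial. At $\mathcal{F}_0$ they are $\tfrac{3}{2}(\gamma-1)$ and $\tfrac{3}{2}(\gamma-2)$; at $\mathcal{F}$ they are $3(1-\gamma)$ and $-\tfrac{3}{2}$; at $(0,\pm 1)$ they are $+\tfrac{3}{2}$ and $3(2-\gamma)$.

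This immediately yields the claimed classification: $\mathcal{F}_0$ is a sink precisely for $0<\gamma<1$ and a saddle for $\gamma>1$; $\mathcal{F}$ is a sink precisely for $1<\gamma\le 2$ and a saddle for $\gamma<1$; and $(0,\pm 1)$ carries the always-positive eigenvalue $+\tfrac{3}{2}$, so it is never an attractor, consistent with its absence from the statement. To upgrade local stability to the global attractor statement, I would note that the invariant axes confine interior orbits to an open quadrant containing no equilibrium, so an index argument rules out closed orbits; Poincar\'e--Bendixson then forces every forward orbit in the compact invariant region to converge to the unique sink in the relevant $\gamma$-range. Finally, the passage from the averaged to the full oscillatory system is supplied by Theorem~\ref{thm:averaging_scalar_cosmology}: asymptotic stability of $\mathbf{x}_*$ verifies hypothesis~\ref{A5}, and the estimate \eqref{error} together with the smooth near-identity transformation gives convergence of the full solution $\mathbf{x}(t)$ to the same attractor. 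The main obstacle is the degenerate threshold $\gamma=1$, where $\mathcal{F}_0$ and $\mathcal{F}$ exchange stability and a zero eigenvalue appears at both points; this transcritical bifurcation lies outside the open intervals of the statement, but handling it rigorously would require a centre-manifold reduction, which is precisely why the value $\gamma=1$ is excluded from both (i) and (ii).
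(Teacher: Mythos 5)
Your proposal is correct in its core and follows essentially the blueprint that the paper itself points to for this theorem (which it does not prove but cites from \cite{Leon:2021rcx}): derive the planar averaged system \eqref{eq:tau_bianchiI} after discarding the trivial angle \eqref{tauIIIeq:Phi}, classify its equilibria by linear stability, and transfer the conclusion to the full oscillatory flow via the averaging machinery. Your eigenvalues match the ones the paper records for the flat FLRW case ($-\tfrac32(1-\gamma)$ at $\mathcal F_0$, $3(1-\gamma)$ at $\mathcal F$), and the observation $B_\Sigma\equiv B_\Omega-1$ is a clean way to exclude interior equilibria. What you do differently is the global step: index theory plus Poincar\'e--Bendixson in the compact half-disk, upgrading local sinks to attractors, whereas the cited works supplement linear stability with centre-manifold calculations at the non-hyperbolic thresholds. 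Your route is more self-contained; theirs additionally settles the bifurcation values, which you (legitimately, given the statement) exclude.

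Three concrete repairs are needed. First, the transfer step is miscited: for LRS Bianchi I the expansion \eqref{equx} has no $\mathcal O(1)$ term, i.e.\ $\mathbf f^0\equiv 0$, so hypothesis \ref{A5} of Theorem~\ref{thm:averaging_scalar_cosmology} cannot be verified as you claim (the system $\dot{\bar{\mathbf x}}=\bar{\mathbf f}^0(\bar{\mathbf x})=0$ has no asymptotically stable equilibrium); the equilibrium whose stability you established belongs to the frozen slow field $\bar{\mathbf f}^1$, so the correct tool is Corollary~\ref{corollary} (degenerate averaging regime), exactly as the paper invokes it for Bianchi III. Second, at the endpoint $\gamma=2$, which claim (ii) includes, your equilibrium inventory fails: on $\bar\Omega=0$ one has $B_\Sigma=(\gamma-2)(1-\bar\Sigma^2)$, so for $\gamma=2$ the entire $\bar\Sigma$-axis is a curve of equilibria and the ``finitely many equilibria / unique sink'' argument does not apply as stated; the conclusion survives because every point of that line has transverse eigenvalue $\tfrac32 B_\Omega=\tfrac32>0$ (normal repulsion), so no interior orbit can accumulate there, but this must be argued. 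Third, ``every forward orbit in the compact invariant region converges to the unique sink'' is too strong: orbits in the invariant boundary sets converge to saddles (for $0<\gamma<1$ the unit-circle arc is the stable manifold of $\mathcal F$; for $1<\gamma<2$ the $\bar\Sigma$-axis is the stable manifold of $\mathcal F_0$). You need ``interior orbit,'' together with the remark that the saddles' stable manifolds lie entirely in the boundary and that the boundary connections are oriented acyclically, which is what actually closes the Poincar\'e--Bendixson case analysis before concluding \eqref{error}-based convergence of the full solution.
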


\subsubsection{Asymptotic Dynamics in Flat FLRW Geometry}

For the flat FLRW universe with $k = 0$ and $\gamma \neq 1$ is:
\begin{align}\label{avrgsystFLRW}
\frac{d\bar{\Omega}}{d \tau} &= -\frac{3}{2} \bar{\Omega} (\gamma - 1)(\bar{\Omega}^2 - 1) \implies \displaystyle \bar{\Omega}(\tau) = \frac{ \Omega_0 e^{\frac{3\gamma \tau}{2}} }{ \sqrt{ \Omega_0^2 e^{3\gamma \tau} + e^{3\tau}(1 - \Omega_0^2) } }, \quad \bar{\Omega}(0) = \Omega_0.
\end{align}
The evolution of the averaged angle remains trivial \eqref{tauIIIeq:Phi}.
\begin{thm}[Late-Time Attractors, {\cite[Thm.~2]{Leon:2021rcx}}]
For barotropic index $\gamma \in (0,2]$, the attractors of the averaged flat FLRW system are:
(i) Matter-dominated FLRW universe $\mathcal{F}_0$, corresponding to $\bar{\Omega} = 0$, with eigenvalue $-\frac{3}{2}(1 - \gamma)$; it is a \textit{sink} for $0 < \gamma < 1$, and a \textit{source} for $1 < \gamma \leq 2$.
 (ii) Scalar-field dominated solution $\mathcal{F}$, corresponding to $\bar{\Omega} = 1$, with eigenvalue $3(1 - \gamma)$; it is a \textit{source} for $0 < \gamma < 1$, and a \textit{sink} for $1 < \gamma \leq 2$.
\end{thm}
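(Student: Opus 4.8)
The plan is to read the averaged flat FLRW equation \eqref{avrgsystFLRW} as a scalar autonomous ODE on the physical interval and run an elementary phase-line/linear-stability analysis, cross-checked against the closed-form solution already displayed there. Write $g(\bar\Omega):=-\tfrac{3}{2}(\gamma-1)\,\bar\Omega(\bar\Omega^2-1)$ for the right-hand side. The first step is to locate equilibria via $g(\bar\Omega)=0$, giving $\bar\Omega\in\{0,\pm 1\}$. Since $\bar\Omega=\sqrt{(\omega^2\phi^2+\dot\phi^2)/(6H^2)}\ge 0$ is a normalized amplitude constrained by the Friedmann relation to satisfy $\bar\Omega\le 1$, the admissible phase space is the compact interval $[0,1]$, on which only $\mathcal F_0$ $(\bar\Omega=0)$ and $\mathcal F$ $(\bar\Omega=1)$ survive; both endpoints are equilibria, so $[0,1]$ is forward-invariant.

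Next I would linearize using $g'(\bar\Omega)=-\tfrac{3}{2}(\gamma-1)(3\bar\Omega^2-1)$. This yields $g'(0)=\tfrac{3}{2}(\gamma-1)=-\tfrac{3}{2}(1-\gamma)$ at $\mathcal F_0$ and $g'(1)=-3(\gamma-1)=3(1-\gamma)$ at $\mathcal F$, matching the two eigenvalues in the statement. The classification is then purely a sign count in $\gamma-1$: for $0<\gamma<1$ one has $g'(0)<0<g'(1)$, so $\mathcal F_0$ is a sink and $\mathcal F$ a source, while for $1<\gamma\le 2$ the inequalities reverse. Because the system is one-dimensional with exactly two hyperbolic endpoints (for $\gamma\neq 1$), $g$ keeps a single sign on the interior $(0,1)$, so the local classification is automatically global: every interior orbit converges monotonically to the attracting endpoint.

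As an independent check I would take $\tau\to\infty$ in the explicit solution recorded in \eqref{avrgsystFLRW}: comparing the competing exponents $3\gamma\tau$ and $3\tau$ in the denominator shows $\bar\Omega\to 0$ when $\gamma<1$ and $\bar\Omega\to 1$ when $\gamma>1$, confirming the eigenvalue verdict. Finally, invoking Theorem~\ref{thm:averaging_scalar_cosmology} (so $\|\mathbf x-\bar{\mathbf x}\|=\mathcal O(H)$ as $H\to 0$) transfers the averaged attractor to the full oscillatory flow.

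The computation is routine, so the only genuinely delicate point is the threshold $\gamma=1$, deliberately excluded from both sub-cases. There $g\equiv 0$, the vector field vanishes identically, every point of $[0,1]$ is a fixed point, and the explicit solution collapses to $\bar\Omega(\tau)\equiv\Omega_0$; both eigenvalues vanish, so the equilibria are non-hyperbolic and the phase line carries no drift. I would therefore present $\gamma=1$ as the bifurcation value at which $\mathcal F_0$ and $\mathcal F$ exchange stability, and note that hyperbolicity---hence the clean sink/source dichotomy---holds throughout $(0,1)\cup(1,2]$.
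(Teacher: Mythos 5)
Your proposal is correct and follows essentially the same route as the paper: linearizing the one-dimensional averaged equation \eqref{avrgsystFLRW} at the endpoint equilibria $\bar\Omega=0,1$ reproduces exactly the stated eigenvalues $-\tfrac{3}{2}(1-\gamma)$ and $3(1-\gamma)$, and the sign dichotomy in $\gamma-1$ gives the sink/source classification, with the displayed closed-form solution confirming the global convergence. Your phase-line monotonicity argument and the explicit treatment of the non-hyperbolic bifurcation value $\gamma=1$ (excluded in the paper via the hypothesis $\gamma\neq 1$) are minor but sound refinements of the same analysis.
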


\subsubsection{Averaged Dynamics in Bianchi Type-V Geometry}
\label{sect:5}

We extend the averaging framework to the anisotropic Bianchi type-V model~\cite{Millano:2023vny} for the scalar potential
\begin{equation}
V(\phi) = \frac{1}{2} \phi^2 + f\left(1 - \cos\left(\frac{\phi}{f}\right)\right), \quad \omega^2 > 1,
\end{equation}
which supports bounded oscillations and facilitates averaging analysis in curved anisotropic geometries.

To analyze the asymptotic behavior in Bianchi~V geometry, we begin with the full system:
\begin{subequations}
\label{eq:full_bianchiV}
\begin{align}
\dot{H} &= -\frac{3}{2} H^2 \left[ \gamma(1 - \Sigma^2 - \Omega_k - \Omega^2) + 2\Sigma^2 + \frac{2}{3}\Omega_k + 2\Omega^2 \cos^2(\omega t - \Phi) \right] + \mathcal{O}(H^3), \label{eq:fullHV} \\
\dot{\mathbf{x}} &= H\,\mathbf{f}(\mathbf{x}, t) + \mathcal{O}(H^2), \quad \mathbf{x} = (\Omega, \Sigma, \Omega_k, \Phi)^T, \label{eq:fullxV}
\end{align}
\end{subequations}
where the oscillatory vector field is given by:
\begin{align}
\label{eq:fullfV}
\mathbf{f}(\mathbf{x}, t) = {\footnotesize
\begin{pmatrix}
\frac{1}{2} \Omega \left[ -3(\gamma - 2)\Sigma^2 + (2 - 3\gamma)\Omega_k + 3(\Omega^2 - 1)(- \gamma + 2\cos^2(\omega t - \Phi)) \right] \\\\
\frac{1}{2} \left[ \Omega_k((2 - 3\gamma)\Sigma + 2) + 3\Sigma \left( -(\gamma - 2)\Sigma^2 + \gamma + \Omega^2(-\gamma + 2\cos^2(\omega t - \Phi)) - 2 \right) \right] \\\\
\Omega_k \left[ -3\gamma(\Sigma^2 + \Omega^2 + \Omega_k - 1) + 6\Sigma^2 - 2\Sigma + 6\Omega^2 \cos^2(\omega t - \Phi) + 2\Omega_k - 2 \right] \\\\
-\frac{3}{2} \sin(2\omega t - 2\Phi)
\end{pmatrix}.}
\end{align}

Replacing $\dot{\mathbf{x}} = H\,\mathbf{f}(\mathbf{x}, t)$ with its time-averaged counterpart $\dot{\mathbf{y}} = H\,\overline{\mathbf{f}}(\mathbf{y})$, where $\mathbf{y} = (\overline{\Omega}, \overline{\Sigma}, \overline{\Omega}_k, \overline{\Phi})^T$, and using the time averaging \eqref{timeavrg} we obtain the time-averaged system:
\begin{subequations}
\label{eq:avg_bianchiV_components}
\begin{align}
\frac{dH}{d\tau} &= -\frac{1}{2} H \left[ 3\gamma(1 - \overline{\Sigma}^2 - \overline{\Omega}^2 - \overline{\Omega}_k) + 6\overline{\Sigma}^2 + 3\overline{\Omega}^2 + 2\overline{\Omega}_k \right], \label{eq:tau_HV} \\
\frac{d\overline{\Omega}}{d\tau} &= \frac{1}{2} \overline{\Omega} \left[ -3\gamma(\overline{\Sigma}^2 + \overline{\Omega}^2 + \overline{\Omega}_k - 1) + 6\overline{\Sigma}^2 + 3\overline{\Omega}^2 + 2\overline{\Omega}_k - 3 \right], \label{eq:tau_OmegaV} \\
\frac{d\overline{\Sigma}}{d\tau} &= \frac{1}{2} \left\{ \overline{\Sigma} \left[ -3\gamma(\overline{\Sigma}^2 + \overline{\Omega}^2 + \overline{\Omega}_k - 1) + 6\overline{\Sigma}^2 + 3\overline{\Omega}^2 + 2\overline{\Omega}_k - 6 \right] + 2\overline{\Omega}_k \right\}, \label{eq:tau_SigmaV} \\
\frac{d\overline{\Omega}_k}{d\tau} &= -\overline{\Omega}_k \left[ 3\gamma(\overline{\Sigma}^2 + \overline{\Omega}^2 + \overline{\Omega}_k - 1) - 6\overline{\Sigma}^2 + 2\overline{\Sigma} - 3\overline{\Omega}^2 - 2\overline{\Omega}_k + 2 \right]. \label{eq:tau_Omegakv} 
\end{align}
\end{subequations}
The evolution of the averaged angle remains trivial \eqref{tauIIIeq:Phi}.

\begin{thm}[Smooth Transformation Near $ H = 0 $, {\cite[Thm.~ 2]{Millano:2023vny}}]
\label{Thm3.9}
There exists a smooth transformation
\begin{equation}
\mathbf{x}(t) = \bar{\mathbf{x}}(t) + H(t)\,\mathbf{g}(H, \bar{\mathbf{x}}, t),
\end{equation}
such that the full and averaged systems share the same asymptotic behavior.
\end{thm}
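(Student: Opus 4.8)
The plan is to construct the near-identity transformation explicitly by the generating-function (Lie-transform) method of averaging theory \citep{SandersEtAl2010}, adapted to the slowly decaying parameter $H(t)$ that here plays the role of the usual fixed $\varepsilon$. First I would split the oscillatory vector field of \eqref{eq:fullfV} into its period-mean and its zero-mean parts, $\mathbf{f}(\mathbf{x},t)=\overline{\mathbf{f}}(\mathbf{x})+\tilde{\mathbf{f}}(\mathbf{x},t)$, where $\overline{\mathbf{f}}$ is the average \eqref{timeavrg} that generates the averaged system \eqref{eq:avg_bianchiV_components}, and $\tilde{\mathbf{f}}$ collects the purely oscillatory remainder. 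Because every explicit time dependence in \eqref{eq:fullfV} enters only through $\cos^2(\omega t-\Phi)=\tfrac12+\tfrac12\cos(2\omega t-2\Phi)$ and $\sin(2\omega t-2\Phi)$, the field $\tilde{\mathbf{f}}$ is a finite trigonometric polynomial in the fast phase with vanishing average over a period, which makes the construction fully explicit.

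Second, I would define the generating function as the antiderivative of the oscillatory part with the slow variable frozen,
\[
\mathbf{g}(\bar{\mathbf{x}},t):=\int_{t_x}^{t}\tilde{\mathbf{f}}(\bar{\mathbf{x}},\omega s)\,ds .
\]
Since the integrand is a zero-mean trigonometric polynomial, this integral evaluates in closed form and is bounded, $C^1$ in $\bar{\mathbf{x}}$, and $2\pi/\omega$-periodic in $t$, with amplitude $\mathcal{O}(\omega^{-1})$; assumption \ref{A6} then renders it $\mathcal{O}(H(t_x))$. Setting $\mathbf{x}=\bar{\mathbf{x}}+H\,\mathbf{g}(\bar{\mathbf{x}},t)$, with the leading part of $\mathbf{g}$ taken $H$-independent and any residual $H$-dependence deferred to the remainder, differentiating along the averaged flow $\dot{\bar{\mathbf{x}}}=H\overline{\mathbf{f}}(\bar{\mathbf{x}})$ gives
\[
\dot{\mathbf{x}}=H\overline{\mathbf{f}}(\bar{\mathbf{x}})+H\,\partial_t\mathbf{g}+\dot{H}\,\mathbf{g}+H\,D_{\bar{\mathbf{x}}}\mathbf{g}\cdot\dot{\bar{\mathbf{x}}} .
\]
By construction $\partial_t\mathbf{g}=\tilde{\mathbf{f}}(\bar{\mathbf{x}},\omega t)$, so the first two terms reassemble into $H\,\mathbf{f}(\bar{\mathbf{x}},\omega t)$; a Taylor expansion $\mathbf{f}(\bar{\mathbf{x}},\omega t)=\mathbf{f}(\mathbf{x},\omega t)+\mathcal{O}(H)$ (using the $C^1$ regularity of \ref{A1}), together with $\dot{H}=\mathcal{O}(H^2)$ from \eqref{eq:tau_HV} and the boundedness of $\mathbf{g}$, shows that $\mathbf{x}$ solves the full system \eqref{eq:fullxV} up to $\mathcal{O}(H^2)$, exactly the remainder order tolerated by \ref{A2}.

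Third, I would verify that the map is a genuine smooth change of variables: its Jacobian $I+H\,D_{\bar{\mathbf{x}}}\mathbf{g}$ is invertible by a Neumann series once $H$ is small enough, so beyond some $t_0$ the transformation is a $C^1$ diffeomorphism in $\bar{\mathbf{x}}$, depending smoothly on $(H,t)$. Asymptotic equivalence then follows from Theorem~\ref{thm:averaging_scalar_cosmology}: the bound \eqref{error} already gives $\|\mathbf{x}-\bar{\mathbf{x}}\|=\mathcal{O}(H)$, and the explicit transformation shows the difference is $H\mathbf{g}\to0$ as $H(t)\to0$, so both trajectories approach the same $\omega$-limit set and, under \ref{A5}, the same equilibrium $\mathbf{x}_*$.

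The hard part is that $H$ is not a fixed parameter but a slowly varying state governed by \eqref{eq:tau_HV}, so the classical near-identity estimates must be re-derived along the coupled flow, and two uniformities need care. First, the absence of secular growth: the generating function $\mathbf{g}$ must stay bounded even though the slow variable $\Phi$ drifts inside the fast phase $\omega t-\Phi$. This is secured because $d\bar{\Phi}/d\tau=0$ by \eqref{tauIIIeq:Phi}, so $\Phi$ is frozen at leading order and its drift feeds back only at $\mathcal{O}(H)$. Second, the boundary terms generated when integrating the oscillatory part scale like $H/\omega$ and must be shown to be $\mathcal{O}(H^2)$ and reabsorbed into the remainder, which is precisely the role of the frequency–amplitude scaling \ref{A6}. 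Once these are established uniformly on $[t_0,\infty)$, continuity on the compact interval $[t_x,t_0]$ closes the argument.
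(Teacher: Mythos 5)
Your proposal is correct and follows essentially the same route as the paper: the paper imports this statement from \cite{Millano:2023vny} and indicates only that $\mathbf{g}$ is ``obtained by construction'' and is smooth, bounded, and periodic, which is precisely your generating-function integral of the zero-mean oscillatory part of \eqref{eq:fullfV}, made bounded by the trigonometric-polynomial structure and the frozen averaged phase \eqref{tauIIIeq:Phi}. Your use of \ref{A6} to absorb the $H/\omega$ boundary terms and of Theorem~\ref{thm:averaging_scalar_cosmology} to conclude that both flows reach the same asymptotic state mirrors the paper's own logic, so no gap remains to flag.
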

\begin{thm}[Averaging Validity, {\cite[Thm.~ 1]{Millano:2023vny}}]
Let $ \mathbf{x}(t) $ solve a scalar field cosmology system in LRS Bianchi~V geometry. Then, the full and averaged solutions satisfy
\begin{equation}
\| \mathbf{x}(t) - \bar{\mathbf{x}}(t) \| = \mathcal{O}(H(t)), \quad \text{as } t \to \infty,
\end{equation}
with modulation-frequency relation  $f = (\omega^2 - 1)^{-1}$.
\end{thm}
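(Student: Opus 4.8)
The plan is to recognize the Bianchi~V system \eqref{eq:full_bianchiV}--\eqref{eq:fullfV} as an instance of the quasi-standard form \eqref{eq:quasi_standard} and then invoke Corollary~\ref{corollary}. The decisive structural observation is that the evolution \eqref{eq:fullxV} is written as $\dot{\mathbf{x}} = H\,\mathbf{f}(\mathbf{x},t) + \mathcal{O}(H^2)$, with no $\mathcal{O}(H^0)$ contribution. Thus $\mathbf{f}^0 \equiv 0$ and $\mathbf{f}^1 = \mathbf{f}$, placing us squarely in the degenerate averaging regime. First I would verify that the vanishing of $\mathbf{f}^0$ is exactly what the modulation-frequency relation enforces: substituting $\mu^2 = \tfrac{1}{2}$ into the generalized potential \eqref{gen-harmonic} forces $f^2(\omega^2 - 2\mu^2) = f$, i.e. $f(\omega^2 - 1) = 1$, so $f = (\omega^2-1)^{-1}$ with $\omega^2 > 1$. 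This is the Bianchi~V analogue of the relation $b\mu^3 + 2f\mu^2 - f\omega^2 = 0$ used to eliminate $\mathbf{f}^0$ in the Bianchi~III reduction.

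Next I would check the hypotheses (\ref{A1})--(\ref{A6}) one by one for the explicit field \eqref{eq:fullfV}. Smoothness and $2\pi$-periodicity in $\theta = \omega t$ (\ref{A1}) are immediate, since every entry of $\mathbf{f}$ is a polynomial in $(\Omega,\Sigma,\Omega_k)$ whose coefficients are built from $\cos^2(\omega t - \Phi)$ and $\sin(2\omega t - 2\Phi)$. The remainder bounds (\ref{A2}) follow from the expansion that produced \eqref{eq:full_bianchiV}. For (\ref{A3}) I would compute the averages directly using $\langle \cos^2(\omega t - \Phi)\rangle = \tfrac{1}{2}$ and $\langle \sin(2\omega t - 2\Phi)\rangle = 0$; this reproduces the polynomial averaged field \eqref{eq:avg_bianchiV_components}, which is manifestly Lipschitz on compact subsets of the physical state space cut out by the Friedmann constraint \eqref{eq:BV_system}. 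Matched initial data (\ref{A4}) is imposed, and the frequency–amplitude scaling (\ref{A6}) is delivered precisely by $f = (\omega^2-1)^{-1}$, which ties $\omega^{-1}$ to the smallness parameter and guarantees that the boundary terms from integration by parts are $\mathcal{O}(H^2)$.

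With the hypotheses in place, the estimate $\|\mathbf{x}(t) - \bar{\mathbf{x}}(t)\| = \mathcal{O}(H(t))$ follows verbatim from the Gr\"onwall-type argument in Corollary~\ref{corollary}: the error $\mathbf{e} = \mathbf{x} - \bar{\mathbf{x}}$ obeys a variation-of-constants bound with kernel $e^{L(t-s)}H(s)$, and monotone decay of $H$ lets one trap $\sup_{[t_0,t]}\|\mathbf{e}\|$ below $\tfrac{2C}{L}H(t)$ once $\tfrac{C}{L}H(t_0) \le \tfrac{1}{2}$. I expect the genuine obstacle to lie in assumption (\ref{A5}): unlike the flat FLRW reduction, the four-dimensional averaged system \eqref{eq:avg_bianchiV_components} couples the anisotropy $\overline{\Sigma}$ and curvature $\overline{\Omega}_k$ nontrivially, so exhibiting an asymptotically stable equilibrium $\mathbf{x}_*$ requires the separate linear-stability and centre-manifold analysis underlying Theorem~\ref{Thm3.9}. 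Once that smooth near-identity transformation is secured, the averaged trajectory converges to $\mathbf{x}_*$, and the error estimate transports this convergence to the full oscillatory flow, completing the proof.
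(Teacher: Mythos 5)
Your proposal is correct and takes essentially the same route as the paper: the paper handles this Bianchi~V statement (cited from~\cite{Millano:2023vny}) exactly as it handles the Bianchi~III case, namely by observing that the expansion \eqref{eq:fullxV} has no $\mathcal{O}(H^0)$ term so that $\mathbf{f}^0 \equiv 0$, identifying the relation $f=(\omega^2-1)^{-1}$ as the specialization of the generalized harmonic potential with $\mu^2=\tfrac{1}{2}$, and invoking the degenerate-regime Corollary~\ref{corollary} with its Gr\"onwall estimate. Your checklist of \ref{A1}--\ref{A6} and your remark that the asymptotic-stability input \ref{A5} comes from the separate linear-stability/centre-manifold analysis match the paper's division of labor between the Averaging Validity theorem and the Late-Time Attractors theorem.
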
 
Late-Time Attractors, {\cite[Thm.~3]{Millano:2023vny}} are the following
\begin{thm}[Late-Time Attractors, {\cite[Thm.~3]{Millano:2023vny}}]
For barotropic index $ \gamma \in (0,2] $, the attractors are: (i) matter-dominated FLRW universe $ \mathcal{F}_0 $, for $ 0 < \gamma \leq 2/3 $; (ii) matter-curvaturej scaling solution $ \mathcal{MC} $, for $ 2/3 < \gamma < 1 $; (iii) scalar-field dominated solution $ \mathcal{F} $, for $ 1 < \gamma \leq 2 $.
\end{thm}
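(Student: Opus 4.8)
The plan is to read the classification off the averaged reduced system and then transport it to the full oscillatory flow using the results already established. First I would note that in the logarithmic time $\tau$ the Hubble equation \eqref{eq:tau_HV} and the trivial angle equation \eqref{tauIIIeq:Phi} decouple, so the asymptotics are controlled by the three-dimensional system \eqref{eq:tau_OmegaV}--\eqref{eq:tau_Omegakv} for $(\overline\Omega,\overline\Sigma,\overline\Omega_k)$. Using the Gauss (Friedmann) constraint from \eqref{eq:BV_system} together with $\Omega_m\ge 0$, I would show that the admissible state space is compact and forward invariant, with $\{\overline\Omega=0\}$ and $\{\overline\Omega_k=0\}$ invariant; compactness guarantees that every interior trajectory has a nonempty $\omega$-limit set inside the physical region, so the classification reduces to locating the attracting equilibrium in each $\gamma$-window.

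Next I would enumerate the equilibria by setting the right-hand sides of \eqref{eq:tau_OmegaV}--\eqref{eq:tau_Omegakv} to zero: the matter point $\mathcal F_0=(0,0,0)$, the scalar-field point $\mathcal F=(1,0,0)$, the matter--curvature scaling solution $\mathcal{MC}$ with $\overline\Omega=0$ and $(\overline\Sigma,\overline\Omega_k)$ fixed by the simultaneous vanishing of \eqref{eq:tau_SigmaV}--\eqref{eq:tau_Omegakv}, together with any boundary states needed to exclude spurious limits. Linearizing at $\mathcal F_0$ gives the eigenvalues $\tfrac{3}{2}(\gamma-1)$, $\tfrac{3}{2}(\gamma-2)$ and $3\gamma-2$, so $\mathcal F_0$ is a sink exactly for $0<\gamma<2/3$; the transverse eigenvalue $3\gamma-2$ vanishes at $\gamma=2/3$, and at this value stability passes to $\mathcal{MC}$, whose own eigenvalues I would compute as functions of $\gamma$ to locate the next exchange at $\gamma=1$, beyond which the scalar-field point becomes the relevant sink. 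Matching these sign changes to the intervals $(0,2/3]$, $(2/3,1)$ and $(1,2]$ is what produces the stated trichotomy.

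I expect two distinct difficulties to form the core of the argument. The first is the \emph{non-hyperbolic} behaviour at the thresholds $\gamma=2/3$ and $\gamma=1$: at each value a center eigenvalue crosses zero, so a one-dimensional center-manifold reduction is required to decide which equilibrium is attracting at and near the boundary (in particular to justify assigning the endpoint $\gamma=2/3$ to the $\mathcal F_0$ branch). The second, special to Bianchi~V, is the role of the curvature variable $\overline\Omega_k$: the reduced equations \eqref{eq:tau_OmegaV}--\eqref{eq:tau_Omegakv} are formally identical to the Bianchi~III system \eqref{eq:tau_bianchiIII}, yet the two geometries have different late-time attractors ($\mathcal F$ versus $\mathcal D$). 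Reconciling this will require a careful treatment of the Bianchi~V physical constraints under the modulation-frequency relation $f=(\omega^2-1)^{-1}$ and of the sign and stability of the transverse $\overline\Omega_k$ direction at the $\gamma>1$ attractor; this is the step I anticipate to be the main obstacle, since a naive linearization in the full three-dimensional system does not by itself single out $\mathcal F$, and one must use the structure of the admissible region to show that generic trajectories are driven to it.

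Finally, to upgrade the local sinks to genuine late-time attractors I would construct a monotone (Lyapunov-type) function on the reduced state space, or apply the Poincar\'e--Bendixson theorem on the invariant faces $\{\overline\Omega=0\}$ and $\{\overline\Omega_k=0\}$ together with a Dulac criterion to rule out periodic orbits, thereby identifying the $\omega$-limit set of every interior orbit with the relevant equilibrium in each $\gamma$-window. With the averaged attractors fixed, I would invoke the smooth near-identity transformation of Theorem~\ref{Thm3.9} together with the averaging estimate \eqref{error} to conclude that the full solutions $\mathbf{x}(t)$ inherit the same asymptotic states up to an $\mathcal O(H(t))$ correction, which completes the classification.
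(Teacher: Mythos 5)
Your overall strategy---enumerate the equilibria of the averaged system \eqref{eq:avg_bianchiV_components}, classify them by linearization plus centre-manifold analysis at the thresholds $\gamma=\tfrac{2}{3},1$, then transport the classification to the full oscillatory flow via the near-identity transformation of Theorem~\ref{Thm3.9} and the estimate \eqref{error}---is exactly the methodology the paper relies on (the paper itself supplies no proof here, only the citation to the original reference, and describes the method as ``linear stability analysis, the Centre Manifold calculations, and combined with previous results''). Your eigenvalues $\tfrac{3}{2}(\gamma-1)$, $\tfrac{3}{2}(\gamma-2)$, $3\gamma-2$ at $\mathcal{F}_0$ are correct for the printed system, as is the resulting window $0<\gamma<\tfrac{2}{3}$.

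There is, however, a genuine gap at precisely the step you flag as the ``main obstacle,'' and the repair you propose would not close it. Linearizing the printed system \eqref{eq:tau_OmegaV}--\eqref{eq:tau_Omegakv} at $\mathcal{F}=(1,0,0)$ gives eigenvalues $3(1-\gamma)$, $-\tfrac{3}{2}$, and $+1$: the bracket in \eqref{eq:tau_Omegakv} equals $-1$ there, so the transverse curvature eigenvalue is $+1$ for \emph{every} $\gamma$. Since $\overline{\Omega}_k>0$ is physical in Bianchi~V (open-type curvature), this unstable direction points \emph{into} the admissible region; hence no argument ``using the structure of the admissible region'' can make $\mathcal{F}$ attracting---orbits with arbitrarily small positive curvature are repelled from it. This is exactly why the identical Bianchi~III system \eqref{eq:tau_bianchiIII} selects $\mathcal{D}$ rather than $\mathcal{F}$ for $\gamma\ge 1$, and it is also consistent with the paper's open FLRW result (Milne $\mathcal{C}$ attracts for $\tfrac{2}{3}<\gamma<2$ on the invariant set $\overline{\Sigma}=0$). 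The actual resolution is not dynamical but textual: the system \eqref{eq:avg_bianchiV_components} as printed is a transcription of the Bianchi~III equations, whereas the underlying Bianchi~V field equations \eqref{eq:BV_system} have $\dot{\sigma}=-3H\sigma$ with no curvature source (unlike Bianchi~III's $\dot{\sigma}=-3H\sigma-\tfrac{kK}{3}$), so the correct averaged $\overline{\Sigma}$ and $\overline{\Omega}_k$ equations differ from those shown, and statement (iii) can only be established from the cited reference's own equations and state space, not from \eqref{eq:avg_bianchiV_components}. Note also that the modulation-frequency relation $f=(\omega^2-1)^{-1}$ plays no role in the location or stability of the averaged equilibria---it only eliminates $\mathbf{f}^0$ so that Corollary~\ref{corollary} applies---so it cannot supply the missing stability either.
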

\begin{table}[H]
\centering
\small
\setlength{\tabcolsep}{4pt}
\renewcommand{\arraystretch}{1.1}
\begin{tabular}{clll}
\toprule
\textbf{Label} & $a(t)$ & $b(t)$ & \textbf{Geometry Type} \\
\midrule
$\mathcal{T}$ & $\frac{3 H_0 t + 1}{c_2}$ & $\sqrt{c_2}$ & Taub–Kasner \\
$\mathcal{Q}$ & $c_1^{-2}(3 H_0 t + 1)^{-1/3}$ & $c_1^{-1}(3 H_0 t + 1)^{2/3}$ & Non-flat LRS Kasner (Bianchi I) \\
$\mathcal{D}$ & $c_1^{-1}$ & $\frac{3 H_0 t + 2}{2 \sqrt{c_1}}$ & Bianchi III flat spacetime \\
$\mathcal{F}$ & $c_1^{-1} t^{2/3}$ & $c_2^{-1/2} t^{2/3}$ & Einstein–de Sitter \\
$\mathcal{F}_0$ & $\ell_0 \left( \frac{3 \gamma H_0 t}{2} + 1 \right)^{\frac{2}{3\gamma}}$ & same & FLRW matter dominated \\
$\mathcal{MC}$ & same & same & Matter–curvature scaling \\
$\mathcal{C}$ & $a_0(H_0 t + 1)$ & — & Milne solution \\
$\mathcal{V}_0$ & $\ell_1 \left( \frac{3 \gamma H_0 t}{2} + 1 \right)^{\frac{2}{3\gamma}}$ & $\ell_2 \left( \frac{3 \gamma H_0 t}{2} + 1 \right)^{\frac{2}{3\gamma}}$ & Bianchi V FLRW-like \\
$\mathcal{V}_{\text{MC}}$ & $a_0 t^{2/3}$ & $b_0 t^{2/3}$ & Bianchi V curvature scaling \\
\bottomrule
\end{tabular}
\caption{Metric forms associated with late-time attractors. Bianchi V forms updated from~\cite{Millano:2023vny}.}
\label{tab:metric_attractors}
\end{table}
The late-time attractors of scalar-field cosmologies with generalized harmonic potentials and barotropic matter depend on the geometry and the barotropic index $\gamma \in (0,2]$ as follows:
\begin{itemize}
  \item \textbf{Flat FLRW}: $\mathcal{F}_0$ for $0 < \gamma < 1$; $\mathcal{F}$ for $1 < \gamma \leq 2$.
  \item \textbf{Open FLRW}: $\mathcal{F}_0$ for $0 < \gamma \leq \tfrac{2}{3}$; $\mathcal{C}$ for $\tfrac{2}{3} < \gamma < 2$.
  \item \textbf{LRS Bianchi I}~\cite{Millano:2025vjo}: $\mathcal{F}_0$ for $0 < \gamma < 1$; $\mathcal{F}$ for $1 < \gamma \leq 2$.
  \item \textbf{LRS Bianchi III}: $\mathcal{F}_0$ for $0 < \gamma \leq \tfrac{2}{3}$; $\mathcal{MC}$ for $\tfrac{2}{3} < \gamma < 1$; $\mathcal{D}$ for $1 \leq \gamma \leq 2$.
  \item \textbf{LRS Bianchi V}~\cite{Millano:2023vny}: $\mathcal{V}_0$ for $0 < \gamma \leq \tfrac{2}{3}$; $\mathcal{V}_{\text{MC}}$ for $\tfrac{2}{3} < \gamma < 1$; $\mathcal{F}$ for $1 < \gamma \leq 2$.
\end{itemize}

In Table~\ref{tab:metric_attractors} we list the metric forms of the late‑time attractors for the homogeneous models studied (Bianchi V entries updated from~\cite{Millano:2023vny}). The table encapsulates the main conclusion of the averaged dynamical analysis: spatial geometry together with the barotropic index $\gamma$ selects the asymptotic balance—matter, curvature/anisotropy, or scalar field—that controls expansion. For small $\gamma$ the matter‑dominated FLRW scaling $\mathcal{F}_0$ with $a(t)\propto t^{2/(3\gamma)}$ is preferred, while larger $\gamma$ favors the scalar‑field (stiff‑like) scaling $\mathcal{F}$ (here shown as $t^{2/3}$). Open models can instead approach the curvature‑dominated Milne state $\mathcal{C}$ when curvature decays more slowly than matter.

The averaged framework clarifies anisotropy: LRS Bianchi families admit FLRW‑like limits ($\mathcal{V}_0$) or curvature/anisotropy–matter scaling states ($\mathcal{MC},\mathcal{V}_{\mathrm{MC}}$), depending on whether shear and curvature decay relative to the fluid. Vacuum or shear‑dominated solutions ($\mathcal{T},\mathcal{Q}$) lie on phase‑space boundaries and typically appear as transient or measure‑zero asymptotics rather than generic attractors. Multiplicative constants $c_i,\ell_i,a_0,b_0,H_0$ set amplitudes and time scales but do not alter the qualitative stability classification.

Two practical consequences follow. First, under the hypotheses of the Unified Averaging Theorem, the averaged system approximates the full oscillatory dynamics with error $O(H(t))$ as $H(t)\to0$, and a smooth near‑identity transformation relates the two flows; hence, the averaged fixed points reliably indicate late‑time scaling laws. Second, the values $\gamma=\tfrac{2}{3}$ and $\gamma=1$ mark the principal bifurcations separating matter domination, curvature/anisotropy–matter scaling, and scalar‑field dominance, and therefore guide the linear stability and centre‑manifold analyses. The averaging results apply to flat and open FLRW and to LRS Bianchi I, III, and V under the stated hypotheses; Kantowski–Sachs and closed FLRW require separate treatment because $H(t)$ need not decay monotonically there.
\begin{figure}[t]
  \centering
  \begin{subfigure}[b]{0.30\textwidth}
    \centering
    \includegraphics[width=\textwidth]{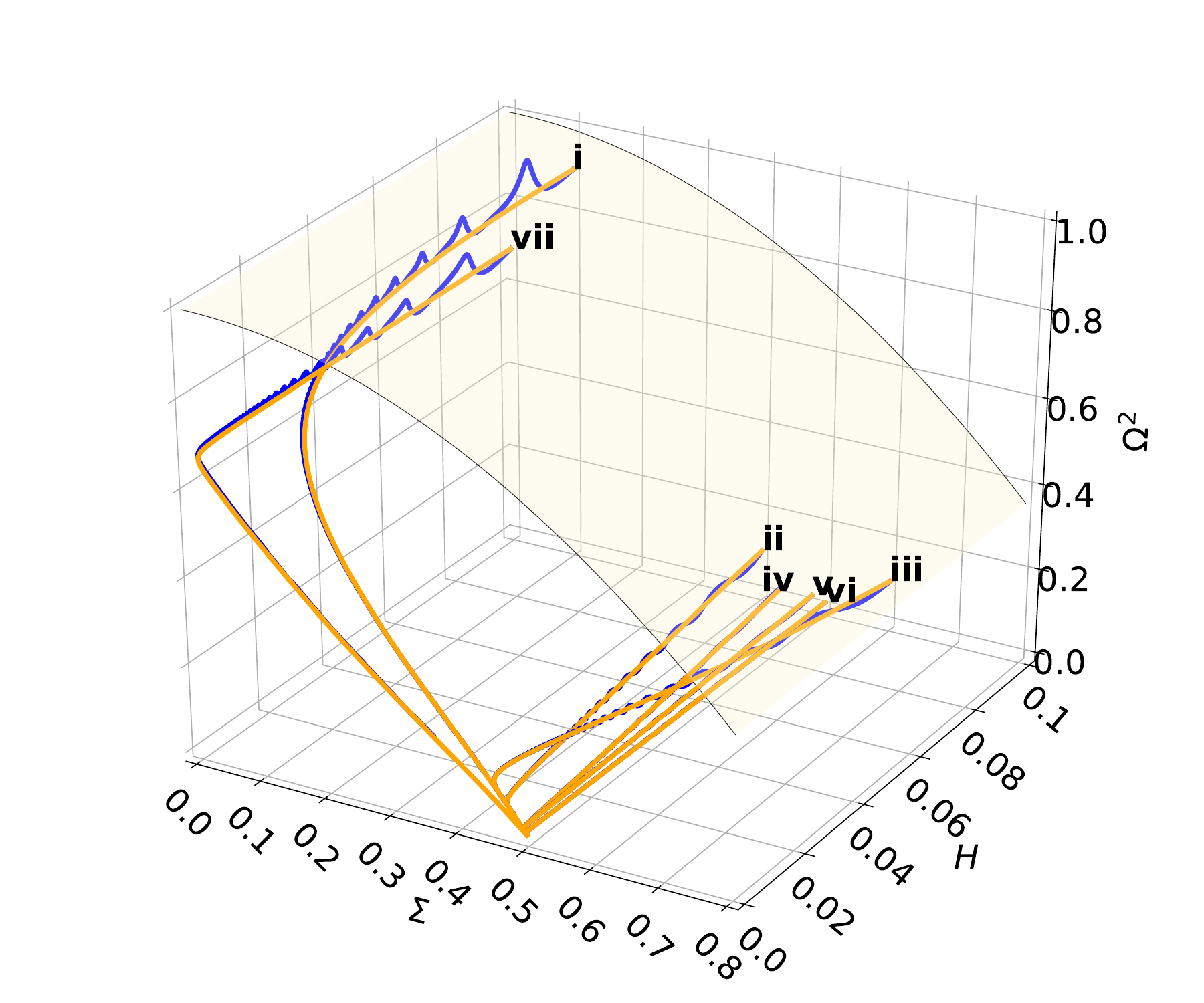}
    \caption{LRS Bianchi III with $\gamma=1$: projection in $(\Sigma,H,\Omega^2)$.}
    \label{fig:BI3_S}
  \end{subfigure}\hfill
  \begin{subfigure}[b]{0.30\textwidth}
    \centering
    \includegraphics[width=\textwidth]{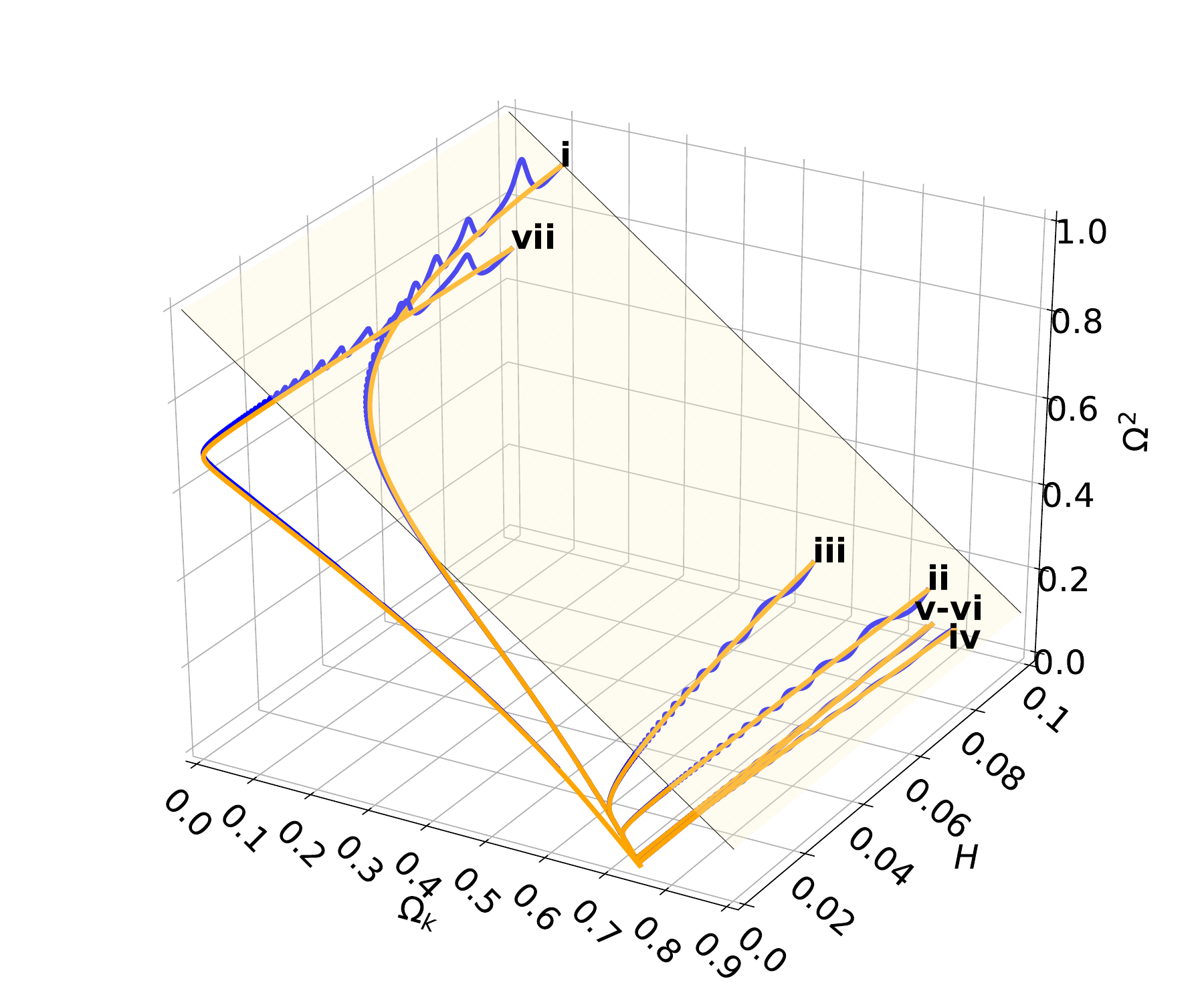}
    \caption{LRS Bianchi III with $\gamma=1$: projection in $(\Omega_k,H,\Omega^2)$.}
    \label{fig:BI3_K}
  \end{subfigure}\hfill
  \begin{subfigure}[b]{0.30\textwidth}
    \centering
    \includegraphics[width=\textwidth]{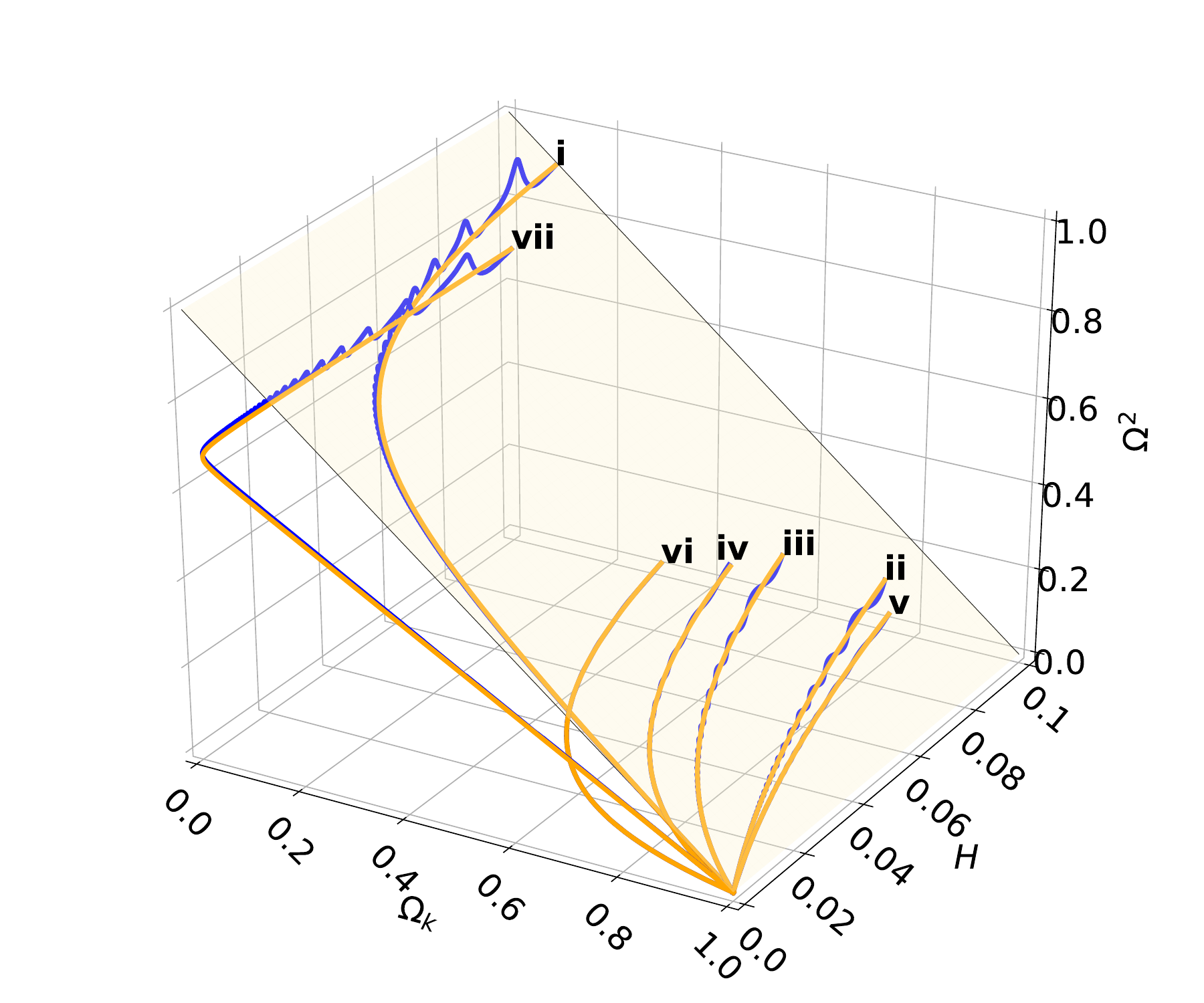}
    \caption{Open FLRW ($k=-1$): projection in $(\Omega_k,H,\Omega^2)$.}
    \label{fig:OpenFLRW}
  \end{subfigure}

  \caption{\label{fig:curvedgeometry3D} Three-dimensional phase-space projections for curved geometries. LRS Bianchi III highlights anisotropic shear and curvature effects; open FLRW illustrates isotropic evolution with negative spatial curvature.}
\end{figure}

In figure \ref{fig:curvedgeometry3D}, three-dimensional phase space projections for curved geometries are represented. LRS Bianchi III highlights anisotropic shear and curvature effects; open FLRW illustrates isotropic evolution with negative spatial curvature.

\section{Conclusions}
\label{sect:4}

We develop a unified, easy-to-apply method that combines dynamical-systems ideas with averaging to study oscillating scalar-field cosmologies with quasi-harmonic potentials and matter coupling, including curvature and anisotropy. Under the regularity, periodicity and averaging decomposition assumptions \ref{A1}--\ref{A3} (which give $\mathbf f^1=\bar{\mathbf f}^1+\tilde{\mathbf f}^1$ with $\tilde{\mathbf f}^1$ of zero $\theta$–mean and control the remainders), the averaged‑in‑$\theta$ system closely follows the full oscillatory system: the two flows are related by a smooth near‑identity change of variables and the pointwise error satisfies $\|\mathbf x(t)-\bar{\mathbf x}(t)\|=\mathcal O(H(t))$ as $H(t)\to0$. The uniform error bound and the conjugacy use the common small initial data and Lipschitz bounds in \ref{A4} and \ref{A3}, convergence of the averaged solution to an equilibrium uses \ref{A5}, and the harmlessness of boundary terms from integration by parts is ensured by the frequency–scaling condition \ref{A6} (or by the model dispersion relation $b\mu^3+2f\mu^2-f\omega^2=0$, leading the special case $\mathbf f^0\equiv0$). As a result, equilibria and invariant sets of the averaged system give reliable late‑time scaling laws and effective equations of state, which lead to the compact, geometry‑dependent attractor classification stated above (for example, flat FLRW and LRS Bianchi I select $\mathcal{F}_0$ when $0<\gamma<1$ and $\mathcal{F}$ when $1<\gamma\le2$; open FLRW selects $\mathcal{F}_0$ for $0<\gamma\le\tfrac{2}{3}$ and Milne $\mathcal{C}$ for $\tfrac{2}{3}<\gamma<2$; analogous ranges hold for LRS Bianchi III and V as listed). Closed FLRW and Kantowski–Sachs require replacing $H$ by the curvature‑sensitive normalization $D=\sqrt{H^2+\tfrac{1}{6}\,{}^{(3)}\!R}$ and the corresponding versions of \ref{A1}--\ref{A3}; there the averaged description is valid only when $D(t)\to0$ uniformly, otherwise a tailored $D$‑based analysis is needed \cite{Leon:2021hxc}. Finally, the same Hubble‑based averaging extends to interacting scalar–matter systems in anisotropic spacetimes (excluding closed geometries), so under the explicitly cited assumptions \ref{A1}--\ref{A6} the framework provides a practical, rigorous procedure to identify attractors, locate bifurcations, and quantify approximation errors in oscillatory scalar‑field cosmologies.

In summary, this unified averaging framework isolates the essential slow dynamics hidden beneath rapid scalar‑field oscillations and renders them analytically tractable. It turns a highly oscillatory, geometry‑dependent system into a controlled perturbation of an autonomous averaged flow with explicit error bounds. This provides a robust foundation for systematic late‑time analysis across a wide class of scalar‑field cosmologies, from flat FLRW to anisotropic open models.

\acknowledgments{Funded by Agencia Nacional de Investigación y Desarrollo (ANID), Chile, through Proyecto Fondecyt Regular 2024, Folio 1240514, Etapa 2025.}

\bibliographystyle{unsrt}
\bibliography{refs.bib}

\end{document}